\newcommand{\C}{\mathcal{C}}
\crefname{hypothesis}{Hypothesis}{Hypotheses}
\title{On Independent Cliques and  Linear Complementarity Problems\thanks{Submitted to the editors DATE.}}
\author{Karan N. Chadha\thanks{Department of Electrical Engineering, Indian Institute of Technology Bombay,
  (\email{karanchadhaiitb@gmail.com}).}
\and Ankur A. Kulkarni\thanks{Systems and Control Engineering, Indian Institute of Technology Bombay, 
  (\email{kulkarni.ankur@iitb.ac.in}).}}
\newcommand*{\addFileDependency}[1]{
  \typeout{(#1)}
  \@addtofilelist{#1}
  \IfFileExists{#1}{}{\typeout{No file #1.}}
}
\newcommand*{\myexternaldocument}[1]{%
    \externaldocument{#1}%
    \addFileDependency{#1.tex}%
    \addFileDependency{#1.aux}%
}
\def\bfone{{\bf 1}}
\def\bfe{{\bf e}}
\def\t{^\top}
\def\th{^{\rm th}}
\def\bkE{{\rm I\kern-.17em E}}
\def\bk1{{\rm 1\kern-.17em l}}
\def\bkD{{\rm I\kern-.17em D}}
\def\bkR{{\rm I\kern-.17em R}}
\def\bkP{{\rm I\kern-.17em P}}
\def\bkZ{{\bf{Z}}}
\def\bkE{{\rm I\kern-.17em E}}
\def\bk1{{\rm 1\kern-.17em l}}
\def\bkD{{\rm I\kern-.17em D}}
\def\bkR{{\rm I\kern-.17em R}}
\def\bkP{{\rm I\kern-.17em P}}
		\def\bkE{{\rm I\kern-.17em E}}
		\def\bk1{{\rm 1\kern-.17em l}}
		\def\bkD{{\rm I\kern-.17em D}}
		\def\bkR{{\rm I\kern-.17em R}}
		\def\bkP{{\rm I\kern-.17em P}}
		\def\bkY{{\bf \kern-.17em Y}}
		\def\bkZ{{\bf \kern-.17em Z}}
		\def\bkC{{\bf  \kern-.17em C}}
\def\Kscr{{\cal K}}
\def\SOL{{\rm SOL}}
\def\LCP{{\rm LCP}}
\let\forallnew\forall
\renewcommand{\forall}{\forallnew\ }
\let\forall\forallnew
\def\b12{(\beta_1,\beta_2)}
\def\subject{\hbox{\rm subject to}}
\newcommand{\Real}{\ensuremath{\mathbb{R}}}
\def\Cbar{\bar C}
\def\Gbar{\bar G}
\def\hbar{\skew{4.2}\bar h}
\def\Nbar{\skew{4.4}\bar N}
\def\Shat{\widehat S}
\def\Sbar{\skew2\bar S}
\def\Vbar{\skew2\bar V}
\def\superstar{^{\raise 0.5pt\hbox{$\nthinsp *$}}}
\def\SUPERSTAR{^{\raise 0.5pt\hbox{$*$}}}
\def\lamstarT {\lambda^{\raise 0.5pt\hbox{$\nthinsp *$}T}}
\newlength{\noteWidth}
\long\def\notes#1{\ifinner
{\tiny #1}
\else
\marginpar{\parbox[t]{\noteWidth}{\raggedright\tiny #1}}
\fi\typeout{#1}}
 \def\notes#1{\typeout{read notes: #1}} 
\newcommand{\ie}{i.e.\@\xspace} 
\newcommand{\maximize}[1]{\displaystyle\maxim_{#1}}
\newcommand{\maxim}{\mathop{\hbox{\rm maximize}}}
\def\norm#1{\|#1\|}
\def\spose#1{\hbox to 0pt{#1\hss}}
\def\text #1{\hbox{\quad#1\quad}}
\newcommand{\pushright}[1]{\ifmeasuring@#1\else\omit\hfill$\displaystyle#1$\fi\ignorespaces}
\newcommand{\pushleft}[1]{\ifmeasuring@#1\else\omit$\displaystyle#1$\hfill\fi\ignorespaces}
\def\nthinsp{\mskip -2   mu}
\newcounter{example}
\renewcommand{\theexample}{\thesection.\arabic{example}}
		\def\bsp{\begin{split}}
		\def\beq{\begin{eqnarray}}
		\def\bal{\begin{align*}}
		\def\bc{\begin{center}}
		\def\be{\begin{enumerate}}
		\def\bi{\begin{itemize}}
		\def\bs{\begin{small}}
		\def\bS{\begin{slide}}
		\def\ec{\end{center}}
		\def\ee{\end{enumerate}}
		\def\ei{\end{itemize}}
		\def\es{\end{small}}
		\def\eS{\end{slide}}
		\def\eeq{\end{eqnarray}}
		\def\eal{\end{align*}}
		\def\esp{\end{split}}
		\def\qed{ \vrule height7.5pt width7.5pt depth0pt}  
\def\maxproblem#1#2#3#4{\fbox
		 {\begin{tabular*}{0.80\textwidth}
			{@{}l@{\extracolsep{\fill}}l@{\extracolsep{6pt}}l@{\extracolsep{\fill}}c@{}}
				#1 & $\maximize{#2}$ & $#3$ & $ $ \\[5pt]
					 & $\subject\ $    & $#4$ & $ $
			\end{tabular*}}
			}
\def\maxfourproblem#1#2#3#4#5#6#7{\fbox
		 {\begin{tabular*}{0.80\textwidth}
			{@{}l@{\extracolsep{\fill}}l@{\extracolsep{6pt}}l@{\extracolsep{\fill}}c@{}}
				#1 & $\maximize{#2}$ & $#3$ & $ $ \\[5pt]
					 & $\subject\ $    & $#4$ & $ $ \\
					 &     & $#5$ & $ $ \\
					 &     & $#6$ & $ $ \\
					 &     & $#7$ & $ $
			\end{tabular*}}
			}
	\def\cp2problem#1#2#3#4{\fbox
		 {\begin{tabular*}{0.9\textwidth}
			{@{}l@{\extracolsep{\fill}}l@{\extracolsep{6pt}}l@{\extracolsep{\fill}}c@{}}
				#1 & & $#4 $ 
			\end{tabular*}}}
		\renewcommand{\emph}[1]{\textbf{#1}}
		\def\bkE{{\rm I\kern-.17em E}}
		\def\bk1{{\rm 1\kern-.17em l}}
		\def\bkD{{\rm I\kern-.17em D}}
		\def\bkR{{\rm I\kern-.17em R}}
		\def\bkP{{\rm I\kern-.17em P}}
		\def\bkZ{{\bf{Z}}}
\newcommand {\beeq}[1]{\begin{equation}\label{#1}}
\newcommand {\eeeq}{\end{equation}}
\newcommand {\bea}{\begin{eqnarray}}
\newcommand {\eea}{\end{eqnarray}}
\def\texitem#1{\par\smallskip\noindent\hangindent 25pt
               \hbox to 25pt {\hss #1 ~}\ignorespaces}
\Crefname{ALC@unique}{Line}{Lines} 
\def\Stilde{\widetilde S}
\newtheorem{example}{Example}
\newcommand{\Rnum}[1]{\uppercase\expandafter{\romannumeral #1\relax}}
\begin{document}

\maketitle

\begin{abstract}
  In recent work (Pandit and Kulkarni [Discrete Applied Mathematics, 244 (2018), pp. 155--169]), the independence number of a graph was characterized as the maximum of the $\ell_1$ norm of solutions of a Linear Complementarity Problem (\LCP) defined suitably using parameters of the graph. Solutions of this LCP have another relation, namely, that they corresponded to Nash equilibria of  a public goods game. Motivated by this, we consider a perturbation of this LCP and identify the combinatorial structures on the graph that correspond to the maximum $\ell_1$ norm of solutions of the new LCP. We introduce a new concept called independent clique solutions which are solutions of the LCP that are supported on independent cliques and show that for small perturbations, such solutions attain the maximum $\ell_1$ norm amongst all solutions of the new LCP.
\end{abstract}

\begin{keywords}
  Linear Complementarity Problems, Independent Sets, Dominating Sets, Cliques
\end{keywords}

\begin{AMS}
90C33, 97K30, 91A99, 91A43, 05C57, 05C35  
\end{AMS}

\section{Introduction}

An undirected graph $G$ is defined as $G = (V,E)$, where $V$ is a finite set of vertices and $E$ is a set of unordered pairs of vertices called {\it edges}. Two vertices $i,j \in V$ are said to be \textit{adjacent} to each other if they have an edge between them, i.e., $(i,j) \in E$.  The \textit{neighborhood} of a vertex is the set of all vertices adjacent to it. A set of vertices such that none of them is adjacent to each other is called an {\it independent set}. An independent set of maximum cardinality is called a \textit{maximum independent set} and its cardinality is denoted as $\alpha(G)$, called the independence number of $G$. An independent set is said to be \textit{maximal} if it not a strict subset of another independent set. If weights $w = (w_1,w_2,\dots,w_{|V|}) \geq 0$ are assigned to each vertex, then a weighted maximum independent set is an independent set which maximizes $\sum_{i \in S}w_i$ over all independent sets $S$ and the value of this maximum is denoted by $\alpha_w(G)$, called the $w$-weighted independence number of $G$.

This paper is about a relation between graphs and a class of continuous optimization problems called Linear Complementarity Problems (LCPs). Given a matrix $M \in \Real^{n \times n}$ and a vector $q \in \Real^n$, LCP($M,q$) is the following problem:
\begin{equation*}
{\rm Find } \ x \in \Real^n \  {\rm such} \ {\rm that} \ \ x \geq 0, \ y = Mx + q \geq 0, \ y\t x =  0.
\end{equation*}
Any such $x$ is called a solution of LCP($M,q$). Our work is motivated by previous results in \cite{pandit2018linear} which established an LCP based characterization of $\alpha_w(G)$. For a graph $G$  on $n$ vertices, let $A \in \{0,1\}^{n\times n}$ denote its adjacency matrix: \ie, $A(i,j)=1$ if $(i,j)\in E$, else it is $0.$
Now consider the LCP given by LCP($I+A,-\bfe$) $=: \LCP(G)$ where $I$ is the $n \times n$ identity matrix, $A$ is the adjacency matrix of $G$ and $\bfe$ is a vector of $1$'s.  The authors of~\cite{pandit2018linear} show that the maximum weighted $\ell_1$ norm amongst the solutions of LCP$(G)$ equals the weighted independence number of $G$. Formally, 
\begin{theorem}\label{thm:parthe1}
 For any simple graph $G$ on  $n$ vertices and a vector of weights $w \geq 0$, we have
 \[
 \alpha_w(G) = \max\{w\t x | x \ {\rm  is \ a \ solution \ of \ \textup{LCP}}(I+A,-\bfe)\}.
 \]
\end{theorem}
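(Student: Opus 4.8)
The plan is to prove the two inequalities $\ge$ and $\le$ separately; the lower bound is elementary and the upper bound is the substantive half. For $\alpha_w(G)\le\max\{w^\top x\}$, take a $w$-weighted maximum independent set $S^*$ and extend it greedily to a \emph{maximal} independent set $S\supseteq S^*$. Since $w\ge 0$ this cannot decrease the weight, so $\sum_{i\in S}w_i=\alpha_w(G)$. Then $x:=\mathbf{1}_S$ solves $\mathrm{LCP}(I+A,-\bfe)$: writing $y=(I+A)x-\bfe$ we get $y_i=(\mathbf{1}_S)_i+|N(i)\cap S|-1$, which is $0$ for $i\in S$ (independence) and $\ge 0$ for $i\notin S$ (maximality forces $i$ to have a neighbour in $S$), while $x^\top y=\sum_{i\in S}1\cdot 0=0$. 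Hence the value $w^\top\mathbf{1}_S=\alpha_w(G)$ is attained by a solution.

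For the reverse bound, fix a solution $x$ and set $D_w=\diag(w)$. Complementarity forces $x_i>0\Rightarrow[(I+A)x]_i=1$, whence the identity
\[
x^\top D_w(I+A)x=\sum_i w_ix_i\,[(I+A)x]_i=\sum_{i:\,x_i>0} w_ix_i=w^\top x .
\]
I would combine this with a weighted analogue of the (complement form of the) Motzkin--Straus inequality:
\[
z^\top D_w(I+A)z\ \ge\ \frac{(w^\top z)^2}{\alpha_w(G)}\qquad\text{for every }z\ge 0 .
\]
Evaluating this at $z=x$ and using the identity gives $w^\top x\ge (w^\top x)^2/\alpha_w(G)$, i.e.\ $w^\top x\le\alpha_w(G)$ (the degenerate case $w^\top x=0$ being trivial since $\alpha_w(G)\ge 0$). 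Together with the first half this proves the theorem.

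It remains to prove the displayed Motzkin--Straus inequality. It suffices to do so for $w>0$, since $w\mapsto\alpha_w(G)=\max_S\sum_{i\in S}w_i$ is continuous, so one may perturb $w$ to $w+\varepsilon\bfe$ and let $\varepsilon\downarrow 0$. For $w>0$ the set $\{z\ge 0:w^\top z=1\}$ is compact, so $f(z):=z^\top D_w(I+A)z$ attains a minimum on it, and it is enough to show that minimum is $\ge 1/\alpha_w(G)$. Choose a minimizer $z$ whose support $T$ has smallest possible cardinality, and suppose $T$ contains an edge $ij$. Moving along $d:=e_i/w_i-e_j/w_j$ (with $e_i,e_j$ standard basis vectors) keeps $w^\top z=1$ and keeps $z\ge 0$ for small steps; the decisive point is that the quadratic coefficient $d^\top D_w(I+A)d$ equals $0$: since $d$ is supported on $\{i,j\}$, the only surviving contributions are $w_id_i[(I+A)d]_i=\frac1{w_i}-\frac1{w_j}$ and $w_jd_j[(I+A)d]_j=\frac1{w_j}-\frac1{w_i}$ (this is exactly where $ij\in E$ is used), and they cancel. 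Thus $f$ is affine along $d$; minimality forces it to be constant, so one can increase the step until $z_j$ (say) reaches $0$, producing a minimizer of strictly smaller support, a contradiction. Hence $T$ is independent, the cross terms of $f$ vanish, and Cauchy--Schwarz gives $f(z)=\sum_{i\in T}w_iz_i^2\ge\big(\sum_{i\in T}w_iz_i\big)^2\big/\big(\sum_{i\in T}w_i\big)=1/\big(\sum_{i\in T}w_i\big)\ge 1/\alpha_w(G)$.

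I expect the Motzkin--Straus inequality to be the only real obstacle, and within it the crucial insight is the choice of shifting direction $e_i/w_i-e_j/w_j$: it is forced by the twin requirements of preserving the normalization $w^\top z=1$ and annihilating the quadratic part of $f$ along an edge, and without both properties the argument collapses (e.g.\ the naive direction $e_i-e_j$ changes the constraint). Once these hold, the classical "push mass along the support to reduce it" mechanism runs and the independent-set structure drops out. Everything else — the complementarity identity, the maximal-independent-set extension, and the passage from $w>0$ to $w\ge 0$ — is routine.
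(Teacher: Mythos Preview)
Your proof is correct and takes a genuinely different route from the paper. The paper does not prove this theorem directly (it is cited from prior work), but its proof of \cref{thm:geq1} specializes at $\delta=1$ to exactly this statement, and that argument proceeds by \emph{induction on $|V(G)|$}: if a maximizer $x^*$ has full support, one sums the identities $x_i^*+\sum_j a_{ij}x_j^*=1$ over a maximum-weight independent set $S$ with weights $w_i$, rearranges to $w^\top x^*=\alpha_w(G)-\sum_{j\notin S}x_j^*\bigl(\sum_{i\in S}w_ia_{ij}-w_j\bigr)$, and observes that each bracket is nonnegative (else swapping $j$ into $S$ would raise the weight); if $\sigma(x^*)\subsetneq V$, one restricts to $G_{\sigma(x^*)}$ and invokes the induction hypothesis. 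Your argument instead establishes a weighted complement-form Motzkin--Straus inequality and combines it with the complementarity identity $x^\top D_w(I+A)x=w^\top x$, avoiding induction entirely. The paper's approach is self-contained and in the same spirit as the LCP manipulations used elsewhere in the article, while yours situates the result within the classical quadratic-programming characterizations of the independence number and actually yields the stronger inequality $z^\top D_w(I+A)z\ge (w^\top z)^2/\alpha_w(G)$ for every $z\ge 0$, not just LCP solutions. Your lower-bound half is also slightly more careful: you explicitly extend a weighted maximum independent set to a maximal one, whereas the paper tacitly assumes a $w$-maximum independent set is already maximal, which can fail when some $w_i=0$.
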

It easy to argue that the characteristic vector of any maximal independent set in $G$ solves $\LCP(G)$. Consequently, one trivially has that $\alpha_w(G) \leq w\t x$ for all $x$ that solve $\LCP(G)$. The above result is nontrivial because it shows that no fractional solution $x$ of $\LCP(G)$ can attain a strictly greater value for $w\t x$ than that attained by the characteristic vector of a $w$-weighted maximum independent set.

Indeed there is a third connection which concerns the interpretation of solutions of LCP$(G)$ in terms of Nash equilibria of a {\it public goods game} defined on a network $G$. In the model considered in~\cite{bramoulle2007public}, each vertex $i$ of the graph $G$ is an agent who exerts a scalar effort $x_i \geq 0$ and its utility for the effort profile $x = (x_1,x_2,\dots,x_{|V|})$ is 
\begin{equation*}
 u_i(x) = b\Bigg(x_i + \sum_{j \in N_G(i)}x_j\Bigg) - cx_i,
\end{equation*}
where $N_G(i)$ denotes the neighborhood of vertex $i$ in graph $G$, $b(\cdot)$ is a differentiable strictly concave increasing benefit function and $c$ is the marginal cost of exerting unit effort. Thus each agent benefits from its own effort and the effort exerted by its neighbors (more details on this model can be found in~\cite{bramoulle2007public}). Assume for simplicity that $b^{\prime-1}(c) =1$, \ie, marginal benefit equals marginal cost at unit effort. It is shown in \cite{pandit2018refinement} that the Nash equilibria of the above public goods game are given by  solutions of LCP$(G)$. Moreover, each \textit{maximal independent set} corresponds to an equilibrium of the above game wherein each vertex in the maximal independent set exerts unit effort and all other vertices exert no effort.  It follows that for a given vector of non-negative weights $w$, the maximum weighted effort amongst all equilibria, which is the maximum weighted $\ell_1$ norm amongst the solutions of LCP$(G)$, is achieved by the equilibrium corresponding to the  $w$-weighted maximum independent set. These results illustrate how Nash equilibria of this game are intimately related to combinatorial structures on the underlying graph.

In this paper, we are concerned with perturbations of the above model. Specifically, we ask, what happens if the argument of the benefit function is $x_i + \delta \sum_{j \in N_G(i)}x_j$ instead of $x_i + \sum_{j \in N_G(i)}x_j$? Here $\delta>0$ is a substitutability factor which captures the case when the benefit agent $i$ derives from its neighbors is proportional to $\delta$ times the sum of efforts exerted by its neighbors. We view $\delta$ as a small perturbation from unity, \ie, $\delta$ is close to unity, but may not be exactly unity. Thus we think of $\delta=1$ as the idealized case where efforts of neighbors substitute exactly for the agent's own effort, and  $\delta \neq 1$ can be thought of as arising due to small losses or misspecifications from this idealized situation. 
The existence of equilibria in this case has been studied in \cite{bramoulle2014strategic}. It is easy to argue that an equilibrium exists and all the equilibria of this new game are given by solutions of LCP$(I + \delta A,-\bfe) =:\LCP_\delta(G)$. With this motivation, in this paper, we characterize $\ell_1$ norm maximizing solutions of $\LCP_\delta(G)$ for $\delta \neq 1$, but close to $1$. 

Our effort is to relate these solutions to combinatorial structures on the underlying graph.
Before we mention our contributions, we make a few observations about $\LCP_\delta(G).$ 
First, we observe that small perturbations $\delta$ around unity have a nontrivial effect on the combinatorial structure of solutions. In particular, for $\delta<1$, it is not true that the characteristic vector of every maximal independent set solves $\LCP_\delta(G)$. In fact, a binary vector solves $\LCP_\delta(G)$ if and only if it is the characteristic vector of a $\lceil \frac{1}{\delta}\rceil$-dominating independent set, an object which may not exist in $G$. In the case when a $\lceil \frac{1}{\delta}\rceil$-dominating independent set does not exist, identifying combinatorial structures that support solutions and maximize the $\ell_1$ norm becomes challenging and forces one to expand the search space of combinatorial structures that can be identified with solutions of $\LCP_\delta(G)$.

In this paper we do precisely this. We introduce a new concept called {\it independent cliques solutions} (ICS) which are defined as solutions of $\LCP_\delta(G)$  whose support is a union of independent cliques.  Two cliques in a graph are said to be \textit{independent} if no vertex of one clique has any vertex of the other clique as its neighbor. Independent cliques can be thought of as a generalization of independent sets since when each clique is degenerate (a single vertex), a union of independent cliques is an independent set. We prove that the maximum $\ell_1$ norm amongst all solutions of $\LCP_\delta(G)$ is achieved by an ICS  with $\alpha(G)$ cliques for $\delta \in [\eta(G),1)$, where 
\begin{equation*}
 \eta(G) =  \max\Big{\{}\frac{\omega(G)-3 + \sqrt{(\omega(G)-3)^2 + 4(\omega(G)-1)}}{2(\omega(G)-1)},\frac{\alpha(G)(\omega(G)-1) - \omega(G)}{\alpha(G)(\omega(G)-1)}\Big{\}},
\end{equation*}
and $\omega(G)$ is the size of the largest clique in $G$. Thus while $\ell_1$ norm maximizing solutions of $\LCP(G)$ include those supported by a maximum independent set, \ie, $\alpha(G)$ degenerate cliques, for $\delta \in [\eta(G),1)$ the corresponding solutions of $\LCP_\delta(G)$ comprise of $\alpha(G)$ not-necessarily-degenerate cliques.  
Moreover, in the case when when a unique maximum independent set exists in a graph $G$ the characteristic vector of the unique maximum independent set is a solution of $\LCP_\delta(G)$ and is, in fact, its $\ell_1$ norm maximizing solution for $\delta \in [\eta(G),1)$. Lastly, we show that for $\delta \geq 1$, the results of \cite{pandit2018linear} continue to hold, i.e., the maximum weighted $\ell_1$ norm amongst the solutions of $\LCP_\delta(G)$ is the weighted independence number achieved by the characteristic vector of a $w$-weighted maximum independent set.

These results are proved as follows. We show the existence of ICSs via an algorithm (\cref{alg:ics}) that constructs an ICS for any graph. We prove that \cref{alg:ics} outputs a vector, as a function of $\delta$ with support as a union of independent cliques. Moreover, the support does not depend on $\delta$. From this we show that these independent cliques support an ICS for all $\delta \in [\gamma(G),1)$, where
\begin{equation*}
 \gamma(G) = \frac{\omega(G)-3 + \sqrt{(\omega(G)-3)^2 + 4(\omega(G)-1)}}{2(\omega(G)-1)}.
\end{equation*}
We also prove that the lower bound on $\delta$, namely $\gamma(G)$, is tight by showing via examples that when this $\delta < \gamma(G)$, an ICS need not exist. Next, we prove that the ICS of $\LCP_\delta(G)$ which achieves the maximum $\ell_1$ norm amongst all the ICSs also achieves the maximum $\ell_1$ norm amongst all solutions of $\LCP_\delta(G)$ for $\delta \in [\eta(G),1)$.

The problem of characterizing $\ell_1$ norm maximizing solutions of $\LCP_\delta(G)$ is highly complex since the solution set of the LCP is not convex (it is a union of polyhedra~\cite{cottle92linear}) and no known graph structures directly provide solutions to $\LCP_\delta(G)$. Our results show that for $\delta \geq \eta(G)$, an ICS always exists and is also $\ell_1$ norm maximizing amongst all solutions of $\LCP_\delta(G)$. It is also easy to show that for $\delta < -\frac{1}{\lambda_{\min}(A)}$, $\LCP_\delta(G)$ admits a \textit{unique} solution that is related to centrality notions on graphs (see~\cite{bramoulle2014strategic}). It would be fascinating to ascertain the combinatorial structures that characterize $\ell_1$ norm maximizing solutions of $\LCP_\delta(G)$ for the entire range of $\delta$ from $-\frac{1}{\lambda_{\min}(A)}$ to unity. Though our results do not span this range, we believe they nonetheless provide interesting relations between the structural properties of graphs and solutions of $\LCP_\delta(G)$. 

\subsection{Related Work} Our results, in effect, give a characterization of the solutions to a special class of Linear Programs with Complementary Constraints (LPCC). In its most general form, an LPCC is defined as 

\maxfourproblem{LPCC}{x,y}{c\t x + d\t y}{Bx + Cy \geq b}{Mx + Ny + q \geq 0}{x \geq 0}{x\t(Mx + Ny + q) = 0 } \\

When we take $B,C,b,N,d$ to be $0$ in the LPCC, and take $c = \bfe$, $M = I + \delta A$ and $q = -\bfe$, the LPCC reduces to the problem we consider. LPCCs provide a generalization to problem classes such as linear programming and finding sparse (minimum $\ell_0$ norm) solutions of linear equations \cite{hu12lpcc, hu12linear}. The results in \cite{pandit2018linear} show that it is hard to find approximate solutions of an LPCC. While the LPCC is a newly explored topic, LCPs are deeply studied subjects, book-length treatments of which can be found in \cite{cottle92linear} and \cite{murty1988linear}. 

The idea that Nash equilibria of games can be related to solutions of LCPs is not new. Consider a simultaneous move game with two players (\Rnum{1},\Rnum{2}), where player \Rnum{1} has $m$ possible actions and the player \Rnum{2} has $n$ possible actions. The cost matrices $A_{\rm \Rnum{1}},A_{\rm \Rnum{2}} \in \Real^{m \times n}$ are such that when player \Rnum{1} chooses action $i$ and player {\Rnum{2}} chooses action $j$, they incur costs $A_{\rm \Rnum{1}}(i,j)$ and $A_{\rm \Rnum{2}}(i,j)$ respectively. Players can also choose to play mixed strategies which are vectors defined over the probability simplex in a $m$ dimensional space for player \Rnum{1} and $n$ dimensional space for player \Rnum{2}. A Nash equilibrium in mixed strategies in this game is defined as a pair of vectors $x^* \in \Delta^n$, $y^* \in \Delta^m$ such that 
\[
 (x^*)\t A_{\rm \Rnum{1}}y^* \leq x\t A_{\rm \Rnum{1}}y^* \ \forall x \in \Delta^n \ \text{ and } (x^*)\t A_{\rm \Rnum{2}}y^* \leq (x^*)\t A_{\rm \Rnum{2}}y \ \forall y \in \Delta^m,
\]
where $\Delta^k$ is a probabilty simplex in $\Real^k$, $\Delta^k \coloneqq \{x \in \Real^k | \sum_ix_i = 1, x_i \geq 0\}$. Assuming $A_{\rm \Rnum{1}}$ and $A_{\rm \Rnum{2}}$ are entrywise positive matrices, we define ($\tilde x , \tilde y $) as
\[
 \tilde x = \frac{x^*}{(x^*)\t A_{\rm \Rnum{2}}y^*} \text{ and } \tilde y = \frac{y^*}{(x^*)\t A_{\rm \Rnum{1}}y^*}.
\]
It can be shown (\cite{cottle92linear}) that ($\tilde x , \tilde y $) satisfy LCP($M,q$) with 
\[
M = 
 \begin{bmatrix}
  0 & A_{\rm \Rnum{1}} \\
  A_{\rm \Rnum{2}}\t  & 0 \\
 \end{bmatrix}
  \text{ and } q = -\bfe.
\]
More generally, certain equilibria of games involving coupled constraints \cite{kulkarni2012variational} also reduce to LCPs.
 
There has been prior effort at relating LCPs with independent sets (\cite{locatelli2004combinatorics}, \cite{massaro2002complementary}). Particularly, the authors in \cite{massaro2002complementary} show that $x$  is a characterestic vector of a maximal independent set  if and only if $(\frac{x}{||x||_1},y_1,y_2)$ solves LCP($M_G,q_G$) with 
\[
M = 
 \begin{bmatrix}
  A + I & - \bfe &  \bfe \\
  \bfe \t & 0 &  0 \\
  -\bfe \t & 0 &  0 \\
  \end{bmatrix}
  \text{ and } q = \begin{bmatrix}
  0 \\
  \vdots \\
  0 \\
  -1 \\
  1 \\
  \end{bmatrix}.
\]
On the contrary, our work is to characterize $\ell_1$ norm maximizing solutions of LCP$_\delta(G)$, solutions to which lie in a different space ($\Real^n$) compared to those of  LCP($M_G,q_G$) ($\Real^{n+2}$). It is shown in \cite{pandit2018linear} that the characterestic vectors of maximal independent sets are solutions of LCP$(G)$, but the same is not the case with solutions of LCP$_\delta(G)$ for general $\delta$. Our work is distinct from both \cite{pandit2018linear} and \cite{massaro2002complementary} since we consider a different class of LCPs and relate their equilibria to combinatorial structures in the graph.

Generalizations of independent sets have been studied in other contexts. The authors of \cite{siemes1994unique} generalize independent sets to $k$-independent sets where a set $I$ of vertices of $G$ is said to be $k-$independent if $I$ is independent and every independent subset $I$ of $G$ with $| I| \geq | I| -(k - 1)$ is a subset of $I$. It is easy to note that $0$-independent sets are independent sets and $1$-independent sets are unique maximum independent sets. This generalization is more restrictive, and $k$-independent sets need not even exist for all graphs for all $k \geq 1$. In contrast, we provide a more inclusive generalization to a union of independent cliques which includes all independent sets as a special case. To the best of our knowledge, such a generalization is the first of its kind. We also study the special case when unique maximum independent sets exist. These sets need not always exist for a graph. The authors of \cite{hopkins1985graphs} provide sufficient conditions on the graph under which such independent sets exist.

\subsection{Organization of the paper} The rest of the paper is organized as follows. In \cref{sec:pre}, we provide the preliminaries and the notation used throughout the paper along with an introduction to LCPs. In \cref{sec:lcpg}, we discuss properties of LCP$_\delta(G)$ and its solutions. In \cref{sec:les1}, we introduce the notion of Independent Clique Solutions (ICS), provide an algorithm to find them and prove that they achieve the maximum $\ell_1$ norm amongst all LCP$_\delta(G)$ solutions for $\delta < 1$. In \cref{sec:geq1}, we extend the results of \cite{pandit2018linear} to the case when $\delta \geq 1$. The paper concludes in \cref{sec:con}.


\section{Preliminaries and Notation}
\label{sec:pre}

For a graph $G$, we denote by $V(G)$ and $E(G)$ its vertex and edge sets respectively. For a vertex $i \in V(G)$, let $N_G(i)$ denote the neighborhood of $i$ in $G$, i.e. $N_G(i) = \{j \in V(G)|(i,j) \in E(G)\}$. For a set of vertices $K \subset V(G)$,   $N_G(K) = \bigcup_{i \in K}N_G(i) \backslash K$. Also, let $\Nbar_G(K) = N_G(K) \cup K$ denote the closed neighborhood of $K$ in $G$. For $S \subset V(G)$, let $G_S$ denote the graph restricted to the vertex set $S$ and for a vector $x$ indexed by $V(G)$, let $x_S$ denote the subvector of $x$ with components indexed by the vertex set $S$. For a vector $x$ indexed by $V(G)$, let $\sigma(x)$ be its support, i.e. 
\[
\sigma(x) \coloneqq \{i \in V(G)| x_i > 0\}.
\]
Let $\bfone_S$ denote the characteristic vector of set $S$, i.e.
\begin{equation*}
\bfone_S(i) \coloneqq (\bfone_S)_i =  
\begin{cases}
1 , & {\rm  if } \ i \in S \\
0 , & {\rm  otherwise }.
\end{cases}
\end{equation*}
We use the standard notation $||\cdot||_1$ for the $\ell_1$ norm. 
 Let $A$ denote the adjacency matrix of a graph $G$ given by
\begin{equation*}
A(i,j) = a_{ij} =  
\begin{cases}
1 , & {\rm  if } \ (i,j) \in E(G) \\
0 , & {\rm  otherwise }
\end{cases}
\end{equation*}

An \textit{independent set} of a graph is defined as a set of vertices such that none of them are neighbors. A \textit{maximal independent set} is defined as an independent set $S$ such that all the vertices which are not in $S$ have at least one neighbor in $S$, i.e. $|N_G(i) \cap S| \geq 1 \ \forall i \not\in S$. A \textit{maximum} independent set is an independent set which has the highest cardinality amongst all independent sets. The cardinality of the maximum indepdendent set is denoted by $\alpha(G)$ and is also called the independence number of the graph $G$. Let $\alpha_w(G)$ denote the weighted independence number for $w \geq 0$ which is the maximum sum of weights amongst all indepdendent sets, i.e. $\alpha_w(G) \coloneqq \max\{\sum_{i \in S}w_i | S \subset V(G) \ {\rm  independent }\}$. A \textit{dominating} set ($D$) is a set such that $N_G(D) \cup D = V(G)$. Thus, an independent set which is also dominating is a maximal independent set. A \textit{$k$-dominating} set is defined as a dominating set $D$ such that $|N_G(i) \cap D| \geq k \ \forall i \in V(G) \backslash D$. A set which is both $k$-dominating and independent is called a $k$-dominating indepdendent set. It can be observed that a $k$-dominating independent set is also a $m$-dominating independent set for all $m < k$. For $k  = 1$, these are equivalent to maximal independent sets and they always exist. For $k\geq 2$, they may or may not exist depending on the graph. Examples of both the cases is shown in \cref{fig:somisex}.

\begin{figure}[H]
\centering
\begin{subfigure}{.5\textwidth}
  \centering
  \includegraphics[scale = 1]{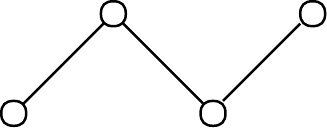}
  \caption{No $2$-dominating independent sets}
  \label{fig:nosomis}
\end{subfigure}%
\begin{subfigure}{.5\textwidth}
  \centering
  \includegraphics[scale = 1]{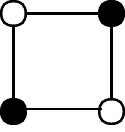}
  \caption{The shaded vertices form a $2$-dominating independent set}
  \label{fig:somis}
\end{subfigure}
\caption{Examples of existence and non-existence of $2$-dominating independent sets}
\label{fig:somisex}
\end{figure}

\section{LCP$_\delta(G)$}\label{sec:lcpg}

For a graph $G$ with adjacency matrix $A$ recall the problem $\LCP(I+\delta A,-\bfe)=\LCP_\delta(G)$ where $\delta$ is a positive parameter. 
%
In this section we prove some properties of LCP$_\delta(G$) that  will be used later in the paper. 

Define $\C(x) \coloneqq (I + \delta A)x$ and denote by $\C_i(x)$ the $i\th$ component of $\C(x)$. $\C_i(x)$ is called as the \textit{discounted sum of the closed neighborhood} of $i$ with respect to $x$,
\begin{equation}\label{eqn:nbdsum}
\C_i(x) = x_i + \delta \sum_{j \in V(G)}a_{ij}x_j =  x_i + \delta \sum_{j \in N_G(i)}x_j.
\end{equation}
We denote the set of solutions of LCP$_\delta(G$) by SOL$_\delta(G$). Clearly $x \in {\rm SOL}_\delta(G)$ if and only if,
\begin{align}\label{eqn:lcp1}
 & x_i \geq 0,  \\ \label{eqn:lcp2}
 & \C_i(x) \geq 1, \\ \label{eqn:lcp3}
\text{and } & x_i(\C_i(x) - 1) = 0 
\end{align}
 $\forall i \in V(G)$.  
We first show a lemma that establishes some basic properties of SOL$_\delta(G$).
\begin{lemma}\label{lem:gen}
Consider the $\LCP_\delta(G) = \LCP(I+\delta A, -e)$. Then,
\begin{enumerate}[label = (\alph*)]
\item $0 \not \in$ $\SOL_\delta(G$),
\item $C(x) \geq x \ \forall x \in \SOL_\delta(G)$,
\item $\SOL_\delta(G) \subset [0,1]^n$,
\item If a graph G is a disjoint union of graphs $G_1$ and $G_2$, then $\textup{SOL}_\delta(G) = \textup{SOL}_\delta(G_1) \times \textup{SOL}_\delta(G_2)$,
\item For a graph $G$, if $x \in \textup{SOL}_\delta(G)$, $\sigma(x)$ is a $\lceil \frac{1}{\delta} \rceil $-dominating set of $G$,
\item For a graph $G$, if $x \in \textup{SOL}_\delta(G)$, $\tilde{x} = x_{\sigma(x)} \in \textup{SOL}_\delta(G_{\sigma(x)})$ and $\sigma{(\tilde{x})} = V(G_{\sigma(x)})$.
\end{enumerate}
\end{lemma}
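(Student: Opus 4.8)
The plan is to read off all six statements directly from the complementarity conditions \eqref{eqn:lcp1}--\eqref{eqn:lcp3}; this is mostly bookkeeping, and the only mild dependencies are that (e) and (f) invoke (c). I would prove them in the order (a), (c), (b), (d), (e), (f).

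\textbf{Parts (a), (c), (b).} For (a), the zero vector violates \eqref{eqn:lcp2}, since $\C_i(0)=0<1$; hence $0\notin\SOL_\delta(G)$. For (c), nonnegativity is precisely \eqref{eqn:lcp1}; for the upper bound, fix $i$ and note that if $x_i>0$ then \eqref{eqn:lcp3} forces $\C_i(x)=1$, i.e. $x_i = 1-\delta\sum_{j\in N_G(i)}x_j\le 1$ because $\delta>0$ and $x_j\ge 0$, while if $x_i=0$ there is nothing to check; thus $\SOL_\delta(G)\subset[0,1]^n$. For (b), for \emph{every} $x\ge 0$ (in particular every solution) and $\delta>0$ one has $\C_i(x)=x_i+\delta\sum_{j\in N_G(i)}x_j\ge x_i$, so $\C(x)\ge x$.

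\textbf{Part (d).} Order the vertices so that those of $G_1$ precede those of $G_2$; then the adjacency matrix is block diagonal, $A=\diag(A_1,A_2)$, hence so is $I+\delta A$. Consequently $\C_i(x)$ depends only on the coordinates of $x$ indexed by the connected component containing $i$. Therefore the system \eqref{eqn:lcp1}--\eqref{eqn:lcp3} restricted to $i\in V(G_1)$ is exactly the LCP$_\delta(G_1)$ system in the variables $x_{V(G_1)}$, and likewise for $G_2$; this gives $\SOL_\delta(G)=\SOL_\delta(G_1)\times\SOL_\delta(G_2)$.

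\textbf{Parts (e), (f).} For (e), take $i\notin\sigma(x)$, so $x_i=0$, and \eqref{eqn:lcp2} gives $\delta\sum_{j\in N_G(i)}x_j\ge 1$. Since $x_j=0$ for $j\notin\sigma(x)$ and $x_j\le 1$ for all $j$ by (c), we get $\delta\,|N_G(i)\cap\sigma(x)|\ge 1$, i.e. $|N_G(i)\cap\sigma(x)|\ge 1/\delta$; as this cardinality is an integer it is at least $\lceil 1/\delta\rceil$, so $\sigma(x)$ is a $\lceil 1/\delta\rceil$-dominating set. For (f), write $S=\sigma(x)$; then $\sigma(\tilde x)=V(G_S)$ is immediate since $V(G_S)=S$ and $\tilde x=x_S$ has all entries positive by definition of the support. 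For the rest, let $\C^{G_S}$ denote the operator $\C$ for the subgraph $G_S$; for $i\in S$, using $x_j=0$ for $j\notin S$ we have $\C^{G_S}_i(\tilde x)=x_i+\delta\sum_{j\in N_G(i)\cap S}x_j = x_i+\delta\sum_{j\in N_G(i)}x_j=\C_i(x)$. Hence \eqref{eqn:lcp1} holds for $\tilde x$ (as $x_i>0$), \eqref{eqn:lcp2} holds (as $\C_i(x)\ge1$), and \eqref{eqn:lcp3} holds because $i\in\sigma(x)$ forces $\C_i(x)=1$, so $\tilde x_i(\C^{G_S}_i(\tilde x)-1)=0$; thus $\tilde x\in\SOL_\delta(G_S)$.

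There is no serious obstacle here; the only point requiring a little care is (e), where one must combine the feasibility inequality with the upper bound $x_j\le 1$ from (c) and then round up to obtain the integer-valued domination condition. Everything else is a direct unpacking of the definitions, so the proof will be short.
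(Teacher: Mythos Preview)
Your proof is correct and follows essentially the same route as the paper's: each part is read off directly from the complementarity conditions, with (c) used to feed the bound $x_j\le 1$ into (e), and (f) obtained by observing that $\C_i$ is unchanged on $\sigma(x)$ since the omitted coordinates vanish. The only cosmetic differences are the ordering (the paper proves (b) before (c) and then invokes (b) inside (c), whereas you prove (c) directly) and your slightly more algebraic phrasing of (d) via the block-diagonal adjacency matrix.
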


\begin{proof}
See \cref{appen:lem:gen}.
\end{proof}

The following lemma characterizes integer (binary) solutions of $\LCP_\delta(G)$.

\begin{lemma}\label{lem:intsol}
For a graph $G$ and $\delta \in (0,\infty)$, $x$ is an integer solution of \textup{LCP}$_\delta(G)$ if and only if it is a characteristic vector of a $\lceil \frac{1}{\delta} \rceil $-dominating independent set of $G$
\end{lemma}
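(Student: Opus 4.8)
The plan is to prove both directions of the equivalence, using the defining conditions \eqref{eqn:lcp1}--\eqref{eqn:lcp3} together with \cref{lem:gen}. Throughout, fix a binary vector $x \in \{0,1\}^n$ and let $S = \sigma(x)$ be its support, so $x = \bfone_S$. Observe that for a binary vector, $\C_i(x) = \bfone_S(i) + \delta |N_G(i) \cap S|$; this is the key quantity to track in both directions.

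\textbf{Forward direction.} Suppose $x = \bfone_S$ solves $\LCP_\delta(G)$. By \cref{lem:gen}(e), $S = \sigma(x)$ is a $\lceil \frac1\delta\rceil$-dominating set, so it remains only to show that $S$ is independent. Suppose not: then there exist adjacent vertices $i, j \in S$. Applying the complementarity condition \eqref{eqn:lcp3} at vertex $i$: since $x_i = 1 > 0$, we need $\C_i(x) = 1$. But $\C_i(x) = x_i + \delta\sum_{k \in N_G(i)} x_k = 1 + \delta\sum_{k \in N_G(i) \cap S} 1 \geq 1 + \delta > 1$ because $j \in N_G(i) \cap S$ and $\delta > 0$. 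This contradicts $\C_i(x) = 1$. Hence $S$ is independent, and being $\lceil\frac1\delta\rceil$-dominating, it is a $\lceil\frac1\delta\rceil$-dominating independent set.

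\textbf{Reverse direction.} Suppose $S$ is a $\lceil\frac1\delta\rceil$-dominating independent set; we verify the three LCP conditions for $x = \bfone_S$. Condition \eqref{eqn:lcp1} ($x_i \geq 0$) is immediate. For \eqref{eqn:lcp2} and \eqref{eqn:lcp3}, split into cases on $i \in S$ versus $i \notin S$. If $i \in S$: since $S$ is independent, $i$ has no neighbor in $S$, so $\C_i(x) = 1 + \delta \cdot 0 = 1$, which satisfies $\C_i(x) \geq 1$ and makes $x_i(\C_i(x)-1) = 1 \cdot 0 = 0$. If $i \notin S$: then $x_i = 0$, so \eqref{eqn:lcp3} holds trivially, and $\C_i(x) = 0 + \delta|N_G(i) \cap S| = \delta|N_G(i)\cap S|$; since $S$ is $\lceil\frac1\delta\rceil$-dominating, $|N_G(i) \cap S| \geq \lceil\frac1\delta\rceil \geq \frac1\delta$, hence $\C_i(x) \geq \delta \cdot \frac1\delta = 1$, verifying \eqref{eqn:lcp2}. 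Thus $x \in \SOL_\delta(G)$.

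I do not anticipate a genuine obstacle here — the argument is a direct case analysis driven by the complementarity and feasibility conditions. The only mild subtlety is the use of the ceiling: one must be careful that $|N_G(i)\cap S|$, being an integer, is $\geq \lceil\frac1\delta\rceil$ exactly when $\delta|N_G(i)\cap S| \geq 1$, which is why the ceiling (rather than $\frac1\delta$ itself) appears in the characterization of $\lceil\frac1\delta\rceil$-dominating sets. This is also the reason \cref{lem:gen}(e) can only conclude $\lceil\frac1\delta\rceil$-domination rather than something stronger, so invoking it in the forward direction is consistent. One could alternatively avoid \cref{lem:gen}(e) entirely and re-derive domination directly in the forward direction via the same case analysis, which might be cleaner for exposition; I would include whichever is shorter.
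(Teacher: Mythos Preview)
Your proof is correct and follows essentially the same route as the paper's: the paper also uses \cref{lem:gen}(e) for domination and the same contradiction on adjacent support vertices for independence, and the reverse direction is the identical two-case check. The one small omission is that you begin by fixing a \emph{binary} vector, whereas the statement says ``integer solution''; the paper explicitly invokes \cref{lem:gen}(c) (that $\SOL_\delta(G)\subset[0,1]^n$) to conclude that any integer solution is in fact $\{0,1\}$-valued, and you should add that one-line justification as well.
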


\begin{proof}
Let $x$ be an integer solution of LCP$_\delta(G)$. Then, by \cref{lem:gen} (c), $x$ is a binary vector, and hence $x = \bfone_S$ for a subset $S$ of vertices in $G$. If $S$ is not an independent set, then there exist $i,j \in S$ such that $j \in N_G(i)$, i.e. $x_i = x_j = 1$, whereby $\C_i(x) = x_i + \delta \sum_{j \in N_G(i)}x_j \geq 1+\delta > 1$. Thus, $\C_i(x) - 1 > 0$ and $x_i > 0$ imply \cref{eqn:lcp3} is violated. This gives a contradiction. Hence, $S$ must be an independent set. From \cref{lem:gen} (e) we have that $S$ must also be a $\lceil \frac{1}{\delta} \rceil $-dominating set, as required.

For the converse, let $S$ be a $\lceil \frac{1}{\delta} \rceil $-dominating indepdendent set of $G$. Then, for $i \in S$, $\C_i(x) = 1 + \delta \sum_{j \in N_G(i)}x_j = 1$ since $x_j = 0 \ \forall j \in N_G(i)$. Also, for $i \not\in S$, $\C_i(x) = 0 + \delta \sum_{j \in N_G(i)}x_j \geq \delta \lceil \frac{1}{\delta} \rceil \geq \delta \times \frac{1}{\delta} = 1$. Thus, we have $\C_i(x) \geq 1$ and $x_i(\C_i(x) - 1) = 0$ for all $i \in V(G)$. Hence, if $x$ is a characteristic vector of a $\lceil \frac{1}{\delta} \rceil $-dominating indepdendent set of $G$, then it is an integer solution of LCP$_\delta(G)$.
\end{proof}

Consider the following optimization problem which gives the $\ell_1$ norm maximizing solution amongst SOL$_\delta(G)$: 
\begin{center}
\maxproblem{{\rm {\rm maxSOL}}$_\delta(G)$}{}{\bfe\t x}{x \in {\rm SOL}_\delta(G).} 
\end{center}
We next show that if a $\lceil \frac{1}{\delta} \rceil $-dominating independent set $S$ is contained in the support of a solution $x$ then $\bfone_S$ has  $\ell_1$-norm no lesser than that of $x$.
\begin{lemma}\label{lem:card}
For a graph $G$ and $\delta \in (0,\infty)$, if $x$ is a solution of \textup{LCP}$_\delta(G)$ and $S \subset \sigma(x)$ is a $\lceil \frac{1}{\delta} \rceil $-dominating independent set of $G$, then $||x||_1 \leq |S|$.
\end{lemma}

\begin{proof}
Let $U = \sigma(x) \backslash S$. From \cref{eqn:lcp3}, we know that
\begin{equation}\label{eqn:card_bas}
\C_i(x) =  x_i + \delta \sum_{j \in V(G)}a_{ij}x_j = 1 \ \forall i \in \sigma(x). 
\end{equation}

Summing over $i \in \sigma(x)$, we get
\begin{equation}\label{eqn:card}
\sum_{i \in V(G)}x_i + \delta \sum_{j \in \sigma(x)}\sum_{j \in V(G)}a_{ij}x_j = |\sigma(x)| = |S| + |U|.
\end{equation}

On rearranging, 
\begin{align}
\nonumber |S| - \bfe\t x &= \delta \sum_{i \in \sigma(x)}\sum_{j \in V(G)}a_{ij}x_j - |U| \\
&= \delta \sum_{i \in S}\sum_{j \in U}a_{ij}x_j + \delta \sum_{i \in U}\sum_{j \in V(G)}a_{ij}x_j - |U| \label{eqn:card2}\\
&= \delta \sum_{j \in U}\sum_{i \in S}a_{ij}x_j + \sum_{i \in U}(\delta \sum_{j \in V(G)}a_{ij}x_j - 1) \label{eqn:card3}\\
&= \delta \sum_{j \in U}N_S(j)x_j - \sum_{i \in U}x_i \label{eqn:card4}\\
&\geq  \sum_{j \in U}x_j - \sum_{i \in U}x_i = 0 \label{eqn:card5}
\end{align}
The equality in \cref{eqn:card2} follows from the fact that $a_{ij} = 0 \ \forall i,j \in S$ and $x_j = 0 \ \forall j \not\in \sigma(x)$, \cref{eqn:card3} follows from \cref{eqn:card_bas}. The inequality \cref{eqn:card5} follows since $S$ is a $\lceil \frac{1}{\delta} \rceil $-dominating independent set of $G$    
\end{proof}
Consequently, if $G$ is such that every solution $x$ of $\LCP_\delta(G)$ contains a $\lceil \frac{1}{\delta} \rceil $-dominating independent in its support, then the solution of maxSOL$_\delta(G)$ would attained a characteristic vector of a $\lceil \frac{1}{\delta} \rceil $-dominating independent set. Clearly, such a property does not hold for all graphs $G$ since $\lceil \frac{1}{\delta} \rceil $-dominating independent sets do not exist in every graph. This requires us to analyze the solutions of maxSOL$_\delta(G)$ more carefully.


We now introduce a function called the \textit{potential function}, the stationary points of which are the solutions of LCP$_\delta(G)$. This was proved by \cite{bramoulle2014strategic} in the context of games on networks and is known more generally in the LCP literature~\cite{cottle92linear}. Define,
\begin{equation}\label{eqn:potfun}
\phi(x;\delta,G) \coloneqq x\t \bfe - \frac{1}{2}x\t (I+\delta A)x \qquad \forall x \in \Real^n
\end{equation}
where $n=|V|.$
\begin{lemma}\label{lem:lcp-potfun}
$\SOL_\delta(G)$ equals the set of stationary points of $\phi(x;\delta,G)$ in ${\Real^n_+}$.
\end{lemma}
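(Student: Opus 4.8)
The plan is to translate ``stationary point of $\phi(\cdot;\delta,G)$ on $\Real^n_+$'' into first-order optimality conditions and check, coordinate by coordinate, that these are precisely \cref{eqn:lcp1}--\cref{eqn:lcp3}. Recall that $x^\ast\in\Real^n_+$ is a stationary point of $\phi$ on the orthant exactly when no feasible direction is an ascent direction, i.e. $\nabla\phi(x^\ast)^\top(y-x^\ast)\le 0$ for all $y\in\Real^n_+$; equivalently, $\nabla\phi(x^\ast)$ lies in the normal cone $N_{\Real^n_+}(x^\ast)=\{v:v\le 0,\ v_i x^\ast_i=0\ \forall i\}$. The desired conclusion is then that $x^\ast$ is such a point if and only if $x^\ast\in\SOL_\delta(G)$.

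First I would compute the gradient. Differentiating \cref{eqn:potfun} gives $\nabla\phi(x)=\bfe-\tfrac12\big((I+\delta A)+(I+\delta A)^\top\big)x$, and since $G$ is undirected its adjacency matrix $A$ is symmetric, so this collapses to $\nabla\phi(x)=\bfe-(I+\delta A)x=\bfe-\C(x)$, whose $i$th coordinate is $1-\C_i(x)$ with $\C_i(x)$ as in \cref{eqn:nbdsum}. Next I would unwind the normal-cone condition: testing $\nabla\phi(x^\ast)^\top(y-x^\ast)\le0$ with $y=x^\ast+e_i$ yields $\partial_i\phi(x^\ast)\le0$, while $y=0$ and $y=2x^\ast$ together force $\langle\nabla\phi(x^\ast),x^\ast\rangle=0$, hence $x^\ast_i\,\partial_i\phi(x^\ast)=0$ for every $i$; conversely, if $\nabla\phi(x^\ast)\le0$ and $x^\ast\ge0$, then $\nabla\phi(x^\ast)^\top(y-x^\ast)=\nabla\phi(x^\ast)^\top y\le0$ for all $y\ge0$. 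Substituting $\nabla\phi(x^\ast)=\bfe-\C(x^\ast)$, the three conditions $x^\ast\ge0$, $\nabla\phi(x^\ast)\le0$, and $x^\ast_i\,\partial_i\phi(x^\ast)=0$ become exactly $x^\ast_i\ge0$, $\C_i(x^\ast)\ge1$, and $x^\ast_i(\C_i(x^\ast)-1)=0$ for all $i$, i.e. \cref{eqn:lcp1}, \cref{eqn:lcp2}, \cref{eqn:lcp3}. Since these characterize $\SOL_\delta(G)$, both inclusions follow simultaneously.

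There is no genuinely hard step here: the only thing to watch is the symmetry of $A$, which is what makes $\nabla\phi$ reduce cleanly to $\bfe-\C(x)$ — for a nonsymmetric $M$ one would instead recover the LCP of the symmetrized matrix $\tfrac12(M+M^\top)$ — and the remainder is the standard dictionary between stationarity over the nonnegative orthant and linear complementarity, as recorded in \cite{cottle92linear} and, in the network-games formulation, in \cite{bramoulle2014strategic}. I would nevertheless include the short self-contained derivation above rather than merely citing it, since it is used repeatedly in the sequel.
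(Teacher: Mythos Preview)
Your proposal is correct and takes essentially the same approach as the paper: the paper invokes the KKT conditions for maximizing $\phi$ over $\Real^n_+$, obtaining $\bfe-(I+\delta A)x=-\mu$ with $\mu\ge0$ and $x^\top\mu=0$, and then identifies these with \cref{eqn:lcp1}--\cref{eqn:lcp3}. Your normal-cone derivation is the same argument unpacked without the explicit multiplier $\mu$ (and with the useful remark that symmetry of $A$ is what makes the gradient collapse to $\bfe-\C(x)$); the only cosmetic slip is that in the converse direction you should state that complementarity is part of the hypothesis when writing $\nabla\phi(x^\ast)^\top(y-x^\ast)=\nabla\phi(x^\ast)^\top y$.
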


\begin{proof}
We find the Karush Kahn Tucker (KKT) conditions of the following maximization problem:
\[
\max \ \phi(x;\delta,G) \ s.t. \ x_i \geq 0. 
\] 
We get $x$ is a stationary point iff $\exists \ \mu \geq 0$ such that $\bfe - (I + \delta A)x = -\mu$ and $x\t \mu = 0$. These conditions are equivalent to the LCP$_\delta(G)$ conditions \cref{eqn:lcp1,eqn:lcp2,eqn:lcp3}.
\end{proof}

We note that, when $\delta < -\frac{1}{\lambda_{\min}(A)}$, where $\lambda_{\min}(A)$ represents the minimum eigenvalue of the adjacency matrix of the graph $G$, the potential function is a strictly concave function. Thus, in this case, SOL$_\delta(G)$ is a singleton.

\section{Solutions of LCP$_\delta(G)$ for $\delta < 1$} \label{sec:les1}

In this section, we assume throughout that $\delta<1$. Our goal is to prove that the solution of {\rm maxSOL}$_\delta(G)$ is achieved by a member of the class of solutions that we call independent clique solution (ICS). To prove this, we first formally define independent clique solutions and prove their existence constructively (\cref{sec:ind_cliq_def}). We also prove some results relating this special class of solutions to independent sets. We then show that the maximum $\ell_1$ norm amongst the solutions in this class is monotonically increasing under graph inclusion (\cref{subsec:monotonic}). Using this, we inductively argue that the solution of {\rm maxSOL}$_\delta(G)$ is achieved by an independent clique solution (\cref{subsec:proof}).

\subsection{Independent Clique Solutions}\label{sec:ind_cliq_def}

First, we note the following definition:

\begin{definition}
Two cliques in a graph are said to be \textit{independent} if no vertex of one clique has any vertex of the other clique as its neighbor.
\end{definition}  

Note that the above definition is a generalization of the definition of independence of vertices which are in fact $K_1$'s.

\begin{definition}
Let $G$ be a graph and $\delta>0$. An \textit{independent clique solution (ICS)} is a solution of $\LCP_\delta(G)$ whose support is a union of independent cliques.
\end{definition}  

We denote the set of independent clique solutions of LCP$_\delta(G)$ as ICS$_\delta(G)$. Consider the following optimization problem which gives us the $\ell_1$ norm maximizing solution amongst ICS$_\delta(G)$: \\

\maxproblem{{\rm maxICS}$_\delta(G)$}{}{\bfe\t x}{x \in \textup{ICS}_\delta(G)} \\

We now show that for any graph $G$ an ICS exists if $\delta$ is greater than a threshold $\gamma(G) \in (0,1)$. We give a constructive proof. \cref{alg:ics} gives a method to construct an ICS using maximum independent sets in any graph. The justification of why this algorithm gives an ICS under suitable conditions on $\delta$ is given in \cref{thm:cliqexist}. 

\begin{algorithm}
\caption{Construct an Independent Clique solution}
\label{alg:ics}
\hspace*{\algorithmicindent} \textbf{Input:} $G(V,E)$ \\
\hspace*{\algorithmicindent} \textbf{Output:} $x$
\begin{algorithmic}[1]
\STATE{Initialize $\Vbar = V(G)$, $\Gbar = G$, $\Sbar = S = $ a maximum independent set of $\Gbar$ and $x = \bfone_S$}\label{line:init1}
\STATE{Set {$\Sbar = \{1,2,\cdots,{|S|}\}$} and define $L = \{l \in V| \ l {\rm \ has \ only \ one \ neighbor \ in} \ S\}$}\label{line:Ldef}
\FOR{$i \in \{1,2,\cdots,|S|\}$}\label{line:bigfor}
\STATE{Define $C_{i} = N_{\Gbar}(i) \cap L$, $\Cbar_{i} = C_{i} \cup \{i\}$ and $Q_{i} = N_{\Gbar}(\Cbar_{i})$}\label{line:C_def}
\FORALL{$j \in \Cbar_{i}$} \label{line:for}
    \STATE{$x_j = \frac{1}{1+(|\Cbar_{i}|-1)\delta}$} \label{line:update}
\ENDFOR\label{line:endfor}
\STATE{Define $ V' \coloneqq \Vbar \backslash (\Nbar_{\Gbar}(\Cbar_{i}))$}\label{line:V'def}
\STATE{Define $ S' \coloneqq \Sbar \backslash \{i\}$}\label{line:S'def}
\STATE{Set $\Vbar \leftarrow V'$, $\Gbar \leftarrow \Gbar_{V'}$}\label{line:Vdef}
\STATE{Set $\Sbar \leftarrow S'$}\label{line:Sdef}
\ENDFOR\label{line:endbigfor}
\RETURN $x$
\end{algorithmic}
\end{algorithm}

We will show that this algorithm returns $x$ which is an ICS supported on $\Cbar_1\cup \Cbar_2 \cdots \cup \Cbar_{|S|}$ for a maximum independent set $S$ in $G$. Below we argue in \cref{lem:complete} that these sets $\{\Cbar_i\}$ are independent cliques. In Theorem~\ref{thm:cliqexist}, we show that the $x$ returned solves the LCP. We first note a few remarks.
\begin{remark} \label{rem:algremS} 
In \cref{alg:ics}, {for each $i$}, $\Nbar_{\Gbar}(\Cbar_{i}) \cap S = \{i\}$. To see this, note that $\Nbar_{\Gbar}(\Cbar_{i}) = \{i\} \cup N_{\Gbar}(i) \cup N_{\Gbar}(C_{i})$. Since $i \in S$, $N_{\Gbar}(i) \cap S = \emptyset$. Since $C_{i} \subset L$,  $N_{\Gbar}(C_i) \cap S = \emptyset$. Thus, no vertex in $\Nbar_{\Gbar}(\Cbar_{i}) \backslash \{i\}$ is in $S$. 
\end{remark}

\begin{remark}\label{rem:algremV}
In \cref{alg:ics}, at the return step, $\Vbar = \emptyset$, i.e., we eventually remove all vertices from the graph. To see this, the set of removed nodes in each step, say $R_{i}$, is such that $\Nbar_G(k) \subset \bigcup_{i = 1}^{k} R_{i} \subset V$. Taking union on both sides from $k = 1$  to $|S|$, we get $\bigcup_{k = 1}^{|S|} \Nbar_G(k) \subset \bigcup_{i = 1}^{|S|} R_{i} \subset V$. $S$ being a maximal independent set, $\bigcup_{i = 1}^{k}\Nbar_G(k) = V$ and hence $\bigcup_{i = 1}^{|S|} R_{i} = V$.
\end{remark}

Now, we prove a few lemmas about the algorithm that help us prove its validity. First, we show that after each iteration, the updated values $S'$ and $V'$ defined in \cref{line:S'def} and \cref{line:V'def} satisfy the conditions satisfied by $\Sbar$ and $\Vbar$ respectively before the iteration.

\begin{lemma}\label{lem:inductive}
Let $\Vbar$ and $\Sbar$ be updated as in \cref{line:Vdef} and \cref{line:Sdef} in \cref{alg:ics} respectively. Then, after any iteration of the loop \cref{line:bigfor} to \cref{line:endbigfor},
\begin{enumerate}[label = (\alph*)]
\item $\Sbar \subset \Vbar$.\label{lem:inductive1}
\item $\Sbar$ is a maximum independent set of $\Gbar$.\label{lem:inductive2}
\end{enumerate}
\end{lemma}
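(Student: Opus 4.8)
The plan is to treat parts (a) and (b) as a loop invariant and argue by induction on the iteration index of the loop in Lines~\ref{line:bigfor}--\ref{line:endbigfor}. For the base case, observe that before the first iteration (i.e.\ after Lines~\ref{line:init1}--\ref{line:Ldef}) we have $\Vbar=V(G)$, $\Gbar=G$ and $\Sbar=S$, which is a maximum independent set of $G$ by construction, so both (a) and (b) hold. For the inductive step, assume that at the start of iteration $i$ we have $\Sbar\subset\Vbar$ and $\Sbar$ is a maximum independent set of $\Gbar$ (which is always the subgraph of $G$ induced on $\Vbar$). Write $V'=\Vbar\setminus\Nbar_{\Gbar}(\Cbar_i)$ and $S'=\Sbar\setminus\{i\}$ for the updated sets produced in Lines~\ref{line:V'def}--\ref{line:Sdef}; we must show $S'\subset V'$ and that $S'$ is a maximum independent set of $\Gbar_{V'}$. (The updates in Lines~\ref{line:for}--\ref{line:endfor} only touch $x$, so they are irrelevant here.) A trivial but convenient observation to record first is that $\Sbar$ only ever loses elements, so $\Sbar\subset S$ throughout the run; in particular $i\in\Sbar\subset\Vbar$, which makes Line~\ref{line:C_def} well defined.

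For part (a): by \cref{rem:algremS} we have $\Nbar_{\Gbar}(\Cbar_i)\cap S=\{i\}$, and since $\Sbar\subset S$ this gives $\Nbar_{\Gbar}(\Cbar_i)\cap\Sbar=\{i\}$. Hence $S'=\Sbar\setminus\{i\}$ is disjoint from $\Nbar_{\Gbar}(\Cbar_i)$, and combining with $S'\subset\Sbar\subset\Vbar$ we get $S'\subset\Vbar\setminus\Nbar_{\Gbar}(\Cbar_i)=V'$, as required.

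For part (b), note first that $S'\subset\Sbar$ is independent in $\Gbar$, and since $S'\subset V'$ by part (a), it is independent in the induced subgraph $\Gbar_{V'}$. For maximality, suppose toward a contradiction that $\Gbar_{V'}$ contains an independent set $T$ with $|T|>|S'|=|\Sbar|-1$, i.e.\ $|T|\geq|\Sbar|$. Since $T\subset V'=\Vbar\setminus\Nbar_{\Gbar}(\Cbar_i)$ and $\Nbar_{\Gbar}(\Cbar_i)$ contains $i$ together with every neighbour of $i$ in $\Gbar$, no vertex of $T$ is adjacent to $i$ in $\Gbar$ and $i\notin T$; moreover $T$ is independent in $\Gbar$ because $\Gbar_{V'}$ is an induced subgraph. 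Thus $T\cup\{i\}$ is an independent set of $\Gbar$ of size $|T|+1\geq|\Sbar|+1$, contradicting the induction hypothesis that $\Sbar$ is a maximum independent set of $\Gbar$. Hence $S'$ is a maximum independent set of $\Gbar_{V'}$, which closes the induction.

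The bookkeeping is essentially routine once \cref{rem:algremS} is available, and the only step that needs genuine care is the maximality argument in (b): one must exploit the asymmetry that the \emph{entire} closed neighbourhood $\Nbar_{\Gbar}(\Cbar_i)$ is removed from the vertex set while only the single vertex $i$ is removed from $\Sbar$ — this is precisely what lets any hypothetical larger independent set of the shrunken graph be augmented by $i$ to beat $\alpha(\Gbar)$. I expect this to be the main (and only real) obstacle; everything else follows from the induced-subgraph relationships and the fact that $\Sbar\subset S$.
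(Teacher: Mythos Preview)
Your proof is correct and follows essentially the same approach as the paper. Part (b) is virtually identical to the paper's argument (augment a hypothetical larger independent set of $\Gbar_{V'}$ by $i$ to contradict maximality of $\Sbar$ in $\Gbar$); for part (a) the paper argues directly about the cumulative removed set $\bigcup_{j\leq k}(\Cbar_j\cup Q_j)$ and shows it misses $S'$, whereas you handle one iteration at a time by invoking \cref{rem:algremS} together with $\Sbar\subset S$, which is slightly cleaner but amounts to the same reasoning.
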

\begin{proof} 
~\\
\begin{enumerate}[label = (\alph*)]
\item It suffices to show that $S' \subset V'$ with $S'$ and $V'$ as defined in \cref{line:V'def} and \cref{line:S'def}. By definition, we have $S' \subset V$. To show that $S' \subset V'$, we need to show that in the vertices we removed from $V$ to form $V'$, there is no member of $S'$. Suppose we have executed \cref{line:Sdef} of the $k\th$ iteration of the algorithm for some $k \leq |S|$. The set of vertices removed from $V$ at this stage to form $V'$ is $\bigcup_{i = 1}^{k} (\Cbar_{i} \cup Q_{i})$, and we need to show that $\Big(\bigcup_{i = 1}^{k} (\Cbar_{i} \cup Q_{i})\Big) \cap S' = \emptyset$. Now, for each $i$, $(\Cbar_{i} \cup Q_{i})$ contains $i$ which is a member of $S$ but not $S'$. The set $(\Cbar_{i} \cup Q_{i})$ also has neighbors of $i$ which can not be in $S'$ since $S' \subset S$ and $S$ is an independent set. Lastly, $(\Cbar_{i} \cup Q_{i})$ has neighbors of $C_{i}$ (defined on \cref{line:C_def} of \cref{alg:ics}) that are also not in $S'$ since $C_{i}$ is such that $\forall c \in C_{i}, \ N_G(c) \cap S = \{i\}$ and hence $N_G(c) \cap S' = \emptyset$. Since this is true for all $i \in \{1,2,\cdots, k\}$, $\bigcup_{i = 1}^{k} (\Cbar_{i} \cup Q_{i}) \cap S' = \emptyset$, and hence $S' \subset V'$.

\item We show this by induction on the iteration number. At the start of the first iteration of the for loop (\cref{line:bigfor} to \cref{line:endbigfor} in \cref{alg:ics}), $\Sbar$ is a maximum independent set of $\Gbar$. Assume that $\Sbar$ is a maximum independent set of $\Gbar$ after $k-1$ iterations (\ie, at the start of iteration $k$). To show that the claim is true at the start of the $(k+1)\th$ iteration, it suffices to show that $S'$ is a maximum independent set of $\Gbar_{V'}$ at the end of $k\th$ iteration. Note that $|S'| = |\Sbar| - 1$, since only one element ($k$) is removed from $\Sbar$ to form $S'$. Suppose $S'$ is not a maximum independent set in $\Gbar_{V'}$. Then there exists an independent set $S''$ of $\Gbar_{V'}$  such that $|S''| > |S'| = |\Sbar| - 1$ and hence $|S''| \geq |\Sbar|$. Now note that $S' \subset V'$ and since $ V'=\Vbar \backslash (\Nbar_{\Gbar}(\Cbar_{k}))$ there is no edge between any vertex in $V'$ and $\{k\}$. But then $\Shat = S'' \cup \{k\}$ is an independent set of $\Gbar$ with $|\Shat| = |S''| + 1 > |\Sbar|$, which is a contradiction. Thus, $S'$ is a maximum independent set in $\Gbar_{V'}$.
\end{enumerate}
\end{proof}

We now come to the first step of showing that \cref{alg:ics} returns an ICS, namely, showing $\{\Cbar_i\}$ are independent cliques.
\begin{lemma}\label{lem:complete}
$\Cbar_i$, as defined in \cref{line:C_def} of \cref{alg:ics}, forms a clique. Moreover, any two cliques $\Cbar_{i}$ and $\Cbar_{j}$ for $i \neq j$ and $i,j \in S$ are independent.
\end{lemma}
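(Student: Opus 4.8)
The plan is to unpack the definitions on Line~\ref{line:C_def} and use the fact that $L$ consists of vertices with exactly one neighbor in $S$. First I would show each $\Cbar_i = C_i \cup \{i\}$ is a clique. Fix $i \in S$ and take two distinct vertices $u,v \in C_i$; by definition $C_i = N_{\Gbar}(i) \cap L$, so $u,v$ are both neighbors of $i$. It remains to show $(u,v) \in E$. Here I expect the key ingredient is that $u$ and $v$ each lie in $L$, hence each has \emph{only one} neighbor in $S$, namely $i$. Suppose $(u,v) \notin E$. I would argue that then $\{u,v\}$ together with $i$ and a large independent set outside the closed neighborhood contradicts the maximality of $S$ in $\Gbar$: concretely, one can delete $i$ from $S$ and add both $u$ and $v$, since neither $u$ nor $v$ has a neighbor in $S \setminus \{i\}$ (as their unique neighbor in $S$ is $i$), and $u \not\sim v$ by assumption, so $(S \setminus \{i\}) \cup \{u,v\}$ is independent of strictly larger cardinality — contradiction. (One must be slightly careful about which graph $\Gbar$ is current; by \cref{lem:inductive}\ref{lem:inductive2}, $\Sbar$ is a maximum independent set of the current $\Gbar$ at the start of the $i$-th iteration, and $i \in \Sbar$, so the argument applies in $\Gbar$.) Also $i \sim u$ for every $u \in C_i$ by construction, so $\Cbar_i$ is a clique.

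Next I would prove independence of $\Cbar_i$ and $\Cbar_j$ for $i \neq j$, $i,j \in S$. The clean way is to invoke \cref{rem:algremV}'s bookkeeping: in the $i$-th iteration the algorithm removes the whole closed neighborhood $\Nbar_{\Gbar}(\Cbar_i)$ before processing $j$ (assuming WLOG $i$ is processed before $j$). Hence when $\Cbar_j$ is formed, every vertex of $\Cbar_j$ lies in the surviving graph $V' = \Vbar \setminus \Nbar_{\Gbar}(\Cbar_i)$, so in particular no vertex of $\Cbar_j$ is in $N_{\Gbar}(\Cbar_i)$, i.e.\ no vertex of $\Cbar_j$ is adjacent to any vertex of $\Cbar_i$, and no vertex of $\Cbar_j$ equals a vertex of $\Cbar_i$. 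By the definition of independent cliques, this is exactly the statement that $\Cbar_i$ and $\Cbar_j$ are independent. I would make sure to note that adjacency is with respect to $G$, not just the restricted graph: since the vertex sets $\Nbar_G(\Cbar_i)$ and $\Cbar_j$ are disjoint and $\Cbar_j \subseteq V'$, there is no $G$-edge between them either, because any such edge would have placed a vertex of $\Cbar_j$ in $N_{\Gbar}(\Cbar_i)$ within the graph $\Gbar$ current at iteration $i$ (all of $\Cbar_j$ is still present in that graph).

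The main obstacle I anticipate is the first part — showing $C_i$ induces a clique — because it is the only place that genuinely uses maximality of $S$ rather than pure bookkeeping, and one has to be careful that the maximality is being used in the correct intermediate graph $\Gbar$ and that the vertices $u, v \in C_i$ indeed have their unique $S$-neighbor equal to $i$ (this is where membership in $L$, i.e.\ "only one neighbor in $S$", is essential, together with the fact that in the current iteration the relevant independent set still contains $i$). The independence-of-cliques part should then follow routinely from the removal structure recorded in Remarks~\ref{rem:algremS} and~\ref{rem:algremV}.
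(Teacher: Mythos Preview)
Your proposal is correct and follows essentially the same route as the paper: for the clique property you swap $i$ out of $S$ for two non-adjacent vertices of $C_i$ (the paper phrases this as choosing an independent set $S_i \subseteq C_i$ of size at least two, which is the same idea), and for independence you use the removal of $\Nbar_{\Gbar}(\Cbar_i)$ before iteration $j$. One simplification: the paper argues directly in $G$ with the original $S$ (since $L$ is defined once with respect to $G$ and $S$, membership in $L$ already forces the unique $S$-neighbor in $G$ to be $i$), so your parenthetical appeal to \cref{lem:inductive} is not needed.
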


\begin{proof}
First we prove that $\Cbar_{i}$ forms a clique for all $i \in \{1,2,\hdots, |S|\}$.  At the $i\th$ iteration of \cref{alg:ics} let $C_{i}$ be as defined in \cref{line:C_def} and let $\Sbar$ be a maximum independent set of $\Gbar$ with $\Vbar = V(\Gbar)$. To show that $\Cbar_i$ forms a clique, it suffices to show that $C_{i}$ forms a clique since all vertices of $C_{i}$ are neighbors of $i$. We prove this by contradiction. Suppose $C_{i}$ is not a clique. Then choose an independent set $S_{i}$ of $C_{i}$ such that it has at least two vertices. Consider $\Shat = (S \backslash \{i\}) \cup S_{i}$. We claim that $\Shat$ is an independent set. To prove this, it suffices to show that $\forall p \in S \backslash \{i\}, \forall q \in S_{i}, \ (p,q)$ is not an edge in $G$. Now, $S_{i} \subset L$ whereby each vertex in $S_{i}$ has only one neighbor in $S$. Moreover, $S_{i} \subset N_{G}(i)$, whereby for all vertices in $S_{i}$, their only neighbor in $S$ is $i$. Consequently, $\Shat$ is an independent set. Moreover, $|\Shat| > |S|$, which is a contradiction since $S$ must be a maximum independent set of $G$ by definition. Hence, $C_{i}$ and therefore $\Cbar_{i}$ must be a clique. Clearly, this holds for $\forall i \in \{1,2,\cdots, |S|\}$.

Now, consider two cliques $\Cbar_{i}$ and $\Cbar_{j}$ for $i \neq j$. Without loss of generality, we assume $i < j$. Then, $\Cbar_{j} \subset V \backslash \Nbar_G(\Cbar_{i})$. Thus, $\Cbar_{j}$ can have no neighbors of $\Cbar_{i}$ as its elements. Hence, $\Cbar_{i}$ and $\Cbar_{j}$ are independent.
\end{proof}
 
To prove that \cref{alg:ics} outputs an ICS, we need to show that $x$ returned at the end of the algorithm satisfies LCP$_\delta(G)$ conditions \cref{eqn:lcp1,eqn:lcp2,eqn:lcp3} for all $i \in V$ and that $\sigma(x)$ is a union of independent cliques. The fact that $\sigma(x)$ is a union of independent cliques is evident from \cref{lem:complete}. To show that the LCP$_\delta(G)$ conditions  \cref{eqn:lcp1,eqn:lcp2,eqn:lcp3} are satisfied, we show for the $x$ generated when we exit from the algorithm, these conditions are satisfied for $\delta \in [\gamma(G),1)$. We do this in  \cref{thm:cliqexist}.

Before we discuss \cref{thm:cliqexist}, we have a lemma describing the conditions on the neighborhood of a vertex $j$ so that with such a neighborhood and corresponding discounted sum of closed neighborhood, $x_j = 0$ for the LCP$_\delta(G)$ conditions \cref{eqn:lcp1,eqn:lcp2,eqn:lcp3} to be satisfied for $j$.

\begin{lemma}\label{lem:cont0}[Two Clique Lemma]
Consider a vertex $i$ and cliques $C_1=K_n$ and $C_2=K_m$ of graph $G$ such that $C_1 \subset N_G(i)$ and there exists another clique $C_2 = K_m$ for some $m$ with $C_1 \cap C_2 = \emptyset$ such that $V(C_2) \cap {N_G(i)} \neq \emptyset$. Let $x \in \Real^n$ such that 
\[
x_{j} =
  \begin{cases}
   \frac{1}{1+(n-1)\delta}        & {\rm if }\ j \in C_1, \\
   \frac{1}{1+(m-1)\delta}         & {\rm if } \ j \in C_2, \\
   0         & {\rm if } \ j = i.
  \end{cases}
\]
and $x_j \geq 0 \ \forall j \in N_G(i)$. Then, if $\delta \in [\gamma(G),1)$, we have that for $i$ the LCP$_\delta(G)$ conditions \cref{eqn:lcp1,eqn:lcp2,eqn:lcp3} are satisfied by the above $x$, where
\begin{equation}\label{eqn:gamma_def}
\gamma(G) = \frac{\omega(G)-3 + \sqrt{(\omega(G)-3)^2 + 4(\omega(G)-1)}}{2(\omega(G)-1)},
\end{equation}
and $\omega(G)$ represents the size of the largest clique in the graph $G$.
\end{lemma}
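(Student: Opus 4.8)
The plan is to notice that, because $x_i = 0$, only one of the three LCP conditions is nontrivial at coordinate $i$: condition \cref{eqn:lcp1} holds since $x_i = 0 \ge 0$, and condition \cref{eqn:lcp3} holds since its first factor $x_i$ vanishes. So the entire content of the lemma is the single scalar inequality \cref{eqn:lcp2} at $i$, namely $\C_i(x) \ge 1$; and since $x_i = 0$ this reads $\delta \sum_{j \in N_G(i)} x_j \ge 1$. Everything else is a matter of obtaining a good lower bound for $\sum_{j \in N_G(i)} x_j$ from the hypotheses on $C_1$ and $C_2$.

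First I would isolate the two guaranteed contributions. Since $C_1 \subseteq N_G(i)$, $C_1 \cap C_2 = \emptyset$, $C_2 \cap N_G(i) \neq \emptyset$, and $x \ge 0$ on $N_G(i)$, discarding the remaining nonnegative terms gives $\sum_{j \in N_G(i)} x_j \ge \sum_{j \in C_1} x_j + \sum_{j \in C_2 \cap N_G(i)} x_j$. Substituting the prescribed values, $\sum_{j \in C_1} x_j = \frac{n}{1+(n-1)\delta}$ and $\sum_{j \in C_2 \cap N_G(i)} x_j \ge \frac{1}{1+(m-1)\delta}$ (there is at least one such vertex), so $\C_i(x) \ge \frac{n\delta}{1+(n-1)\delta} + \frac{\delta}{1+(m-1)\delta}$. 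A one-line monotonicity check then removes $n$ and $m$: for $0 < \delta < 1$ the function $t \mapsto \frac{t\delta}{1+(t-1)\delta} = 1 - \frac{1-\delta}{1-\delta+t\delta}$ is increasing on $t \ge 1$, hence $\ge \delta$ at $t = n \ge 1$; and $t \mapsto \frac{\delta}{1+(t-1)\delta}$ is decreasing, hence $\ge \frac{\delta}{1+(\omega(G)-1)\delta}$ at $t = m \le \omega(G)$. Therefore $\C_i(x) \ge \delta + \frac{\delta}{1+(\omega(G)-1)\delta}$ for every admissible configuration.

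It then remains to show $\delta + \frac{\delta}{1+(\omega-1)\delta} \ge 1$ whenever $\delta \ge \gamma(G)$, where $\omega = \omega(G) \ge 2$ (the vertex $i$ has a neighbor, so $G$ has an edge). Clearing the positive denominator $1 + (\omega-1)\delta$, this is equivalent to $(\omega-1)\delta^2 - (\omega-3)\delta - 1 \ge 0$, a convex quadratic in $\delta$ (leading coefficient $\omega - 1 \ge 1 > 0$) with discriminant $(\omega-3)^2 + 4(\omega-1) = (\omega-1)^2 + 4 > 0$. Its larger root is exactly $\gamma(G)$ as defined in \cref{eqn:gamma_def}, and its smaller root is negative since the product of the roots is $-1/(\omega-1) < 0$; hence the quadratic is nonnegative precisely for $\delta$ below the negative root or for $\delta \ge \gamma(G)$, and in particular for all $\delta \in [\gamma(G),1)$. (Incidentally, evaluating the quadratic at $\delta = 1$ gives $1 > 0$, which re-confirms $\gamma(G) < 1$, so the interval in the statement is nonempty.) Combining, $\C_i(x) \ge 1$, which is \cref{eqn:lcp2} at $i$, and the proof is complete.

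As for difficulty, there is no real obstacle here: the argument is an elementary lower bound followed by solving a quadratic. The only points requiring care are (i) recognizing that $x_i = 0$ forces conditions \cref{eqn:lcp1} and \cref{eqn:lcp3} to hold automatically, so that only \cref{eqn:lcp2} must be checked, and (ii) keeping the two clique contributions straight: $C_1$ contributes its entire mass $\frac{n}{1+(n-1)\delta}$ while $C_2$ is only guaranteed one term $\frac{1}{1+(m-1)\delta}$, so the monotonicity runs in opposite directions in $n$ and $m$ and the genuine worst case $n = 1$, $m = \omega(G)$ is the one that produces the threshold $\gamma(G)$.
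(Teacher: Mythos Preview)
Your argument is correct and essentially follows the paper's own proof: both reduce everything to the scalar inequality $\C_i(x)\ge 1$, lower-bound the neighborhood sum by the full $C_1$-contribution $\tfrac{n}{1+(n-1)\delta}$ plus a single $C_2$-vertex $\tfrac{1}{1+(m-1)\delta}$, identify the extremal configuration $n=1$, $m=\omega(G)$, and read off $\gamma(G)$ as the larger root of the resulting quadratic $(\omega-1)\delta^2-(\omega-3)\delta-1$. The only cosmetic difference is ordering: the paper first solves the two-parameter quadratic to get a threshold $\gamma(m,n)$ and then argues monotonicity in $m,n$, whereas you apply the monotonicity first and solve a single quadratic; both arrive at the same $\gamma(G)$.
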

\begin{proof}
See \cref{appen:lem:cont0}.
\end{proof}

\begin{remark} Note that for any graph with at least one edge, the value of $\omega(G)$ is at least $2$ and hence the value of $\gamma(G)$ is at least $\frac{\sqrt{5}-1}{2}$, which is  the golden ratio. For trees $G$, $\gamma(G)=\frac{\sqrt{5}-1}{2}.$ We do not know of any deeper significance or interpretations of the appearance of the golden ratio in this problem.
\end{remark}

We say that vertex $i$ is {\it fully connected} to a clique $C$ if all the vertices of $C$ are neighbors of $i$, i.e. $j \in N_G(i) \ \forall j \in C$.

\begin{theorem}\label{thm:cliqexist}
For \textup{LCP}$_\delta(G)$ with $\delta \in [\gamma(G),1)$ where 
$\gamma(G)$ is defined as in \cref{eqn:gamma_def},
\cref{alg:ics} returns an ICS whose support is a union of $\alpha(G)$ independent cliques.
\end{theorem}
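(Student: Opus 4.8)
The plan is to verify directly that the vector $x$ returned by \cref{alg:ics} satisfies the three LCP$_\delta(G)$ conditions \cref{eqn:lcp1,eqn:lcp2,eqn:lcp3} at every vertex, and then to count the cliques. Nonnegativity \cref{eqn:lcp1} is immediate from the update rule on \cref{line:update}, which sets each coordinate either to $0$ or to $\frac{1}{1+(|\Cbar_i|-1)\delta}>0$. For the counting claim: \cref{lem:complete} already tells us that $\sigma(x)=\Cbar_1\cup\cdots\cup\Cbar_{|S|}$ is a union of pairwise independent cliques, and since $S$ is a \emph{maximum} independent set, $|S|=\alpha(G)$; moreover the $\Cbar_i$ are distinct (each contains a distinct $i\in S$ and, being independent, they are disjoint), so there are exactly $\alpha(G)$ of them. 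This part is routine once the LCP conditions are checked.

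The heart of the argument is checking \cref{eqn:lcp2,eqn:lcp3}. I would split the vertices into three types according to the run of the algorithm. First, vertices $j\in\sigma(x)$, i.e.\ $j\in\Cbar_i$ for exactly one $i$ (exactly one, by independence of the cliques): here I must show $\C_j(x)=1$. Since $\Cbar_i$ is a clique of size $|\Cbar_i|$ containing $j$, and $x$ vanishes on every neighbor of $j$ outside $\Cbar_i$ (this uses that when $\Cbar_i$ is processed we delete $\Nbar_{\Gbar}(\Cbar_i)$, so no later clique meets $N_G(\Cbar_i)$, and no earlier clique did either by the same deletion, plus \cref{rem:algremS}), we get $\C_j(x)=x_j+\delta(|\Cbar_i|-1)x_j = x_j(1+(|\Cbar_i|-1)\delta)=1$ by the choice on \cref{line:update}. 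This simultaneously gives $\C_j(x)\ge 1$ and the complementarity $x_j(\C_j(x)-1)=0$. Second, vertices $i\in S\setminus\sigma(x)$ — wait, in fact every $i\in S$ lies in $\Cbar_i$, so this case is empty; instead the remaining vertices are those in $V\setminus\sigma(x)$. For these $x_j=0$, so complementarity \cref{eqn:lcp3} holds trivially and only \cref{eqn:lcp2}, i.e.\ $\C_j(x)=\delta\sum_{k\in N_G(j)}x_k\ge 1$, needs proof.

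The bound $\C_j(x)\ge 1$ for $j\notin\sigma(x)$ is where \cref{lem:cont0} (the Two Clique Lemma) and the threshold $\gamma(G)$ enter. I would argue that any such $j$ is deleted at some iteration, say iteration $i$, because $j\in\Nbar_{\Gbar}(\Cbar_i)$, hence $j$ has a neighbor in $\Cbar_i$ at that stage. The key structural claim is that $j$ is in fact adjacent to the \emph{entire} clique $\Cbar_i$, or else adjacent to two vertices of $\Cbar_i$, so that $j$ receives weight at least $\delta\cdot 2\cdot\frac{1}{1+(\omega(G)-1)\delta}$ — and one checks this is $\ge 1$ precisely when $\delta\ge\gamma(G)$, since $\gamma(G)$ is the root of $(\omega-1)\delta^2+(3-\omega)\delta-1=0$ rewritten as $2\delta = 1+(\omega-1)\delta^2$... more carefully, $\delta\ge\gamma(G)\iff 2\delta\ge 1+(\omega(G)-1)\delta^2/$(appropriate form), giving $\frac{2\delta}{1+(\omega(G)-1)\delta}\ge 1$. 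If $j$ is adjacent to only one vertex $c$ of $\Cbar_i$, then at the start of iteration $i$, $j$ had only one neighbor in $\Sbar$ (namely through the path to $i$), so $j\in L$, so $j\in C_i$, so $j\in\Cbar_i$ — contradicting $j\notin\sigma(x)$. This is the argument shape, and I would invoke \cref{lem:cont0} with $C_1=\Cbar_i$ (or the sub-clique of neighbors of $j$ in it) and $C_2$ a second independent clique a later-or-earlier neighbor of $j$ sits in, if needed, to cover the case where $j$'s neighbors are spread across cliques.

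The main obstacle I anticipate is the bookkeeping needed to show that for every $j\notin\sigma(x)$ the weight $\C_j(x)$ accumulates to at least $1$ — specifically, handling a vertex $j$ whose neighbors in $\sigma(x)$ are distributed among \emph{several} cliques $\Cbar_{i_1},\Cbar_{i_2},\dots$ rather than concentrated in one. One must show that either some single clique contributes $\ge 1$ (via two neighbors in it, using \cref{lem:cont0}), or the contributions from two distinct cliques already sum to $\ge 1$; here the crucial point is that $j$ being removed at the \emph{first} such iteration $i_1$ with only one neighbor there would have forced $j\in C_{i_1}$, a contradiction, so at the iteration where $j$ is removed it must have at least two neighbors in the current clique, or it had already been "covered". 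Making this deletion-order argument airtight — reconciling "$j$ is removed at iteration $i$" with "at iteration $i$, $j\notin L$ relative to the then-current $\Sbar$" — is the delicate step, and it is exactly what \cref{lem:inductive} (that $\Sbar$ stays a maximum independent set of $\Gbar$) and \cref{rem:algremV} (that every vertex is eventually removed) are set up to support. I would therefore structure the proof as: (1) every vertex is removed, at a well-defined iteration; (2) vertices in $\sigma(x)$ satisfy $\C=1$; (3) a vertex removed at iteration $i$ but not in $\sigma(x)$ has $\ge 2$ neighbors among the cliques built up to and including step $i$, with an explicit lower bound on $\C$; (4) apply \cref{lem:cont0} to conclude $\C\ge 1$ when $\delta\ge\gamma(G)$; (5) count the cliques as $\alpha(G)$.
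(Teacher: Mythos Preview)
Your overall scaffold is right (split into $\sigma(x)$ and its complement; use the Two Clique Lemma on the complement; count $\alpha(G)$ cliques via \cref{lem:complete}), but the core step for vertices $j\notin\sigma(x)$ has a genuine gap. The inequality you aim for, $\frac{2\delta}{1+(\omega(G)-1)\delta}\ge 1$, is \emph{false} for every $\delta<1$ once $\omega(G)\ge 2$ (at $\omega=2$ it reads $\frac{2\delta}{1+\delta}\ge 1\Leftrightarrow\delta\ge 1$). So ``at least two neighbors in $\sigma(x)$'' is never enough; \cref{lem:cont0} requires $j$ to be \emph{fully} connected to one entire clique and to have a neighbor in a second. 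Your ``sub-clique'' fallback does not rescue this: the values sitting on a proper sub-clique of $\Cbar_i$ are $\frac{1}{1+(|\Cbar_i|-1)\delta}$, strictly smaller than the $\frac{1}{1+(n-1)\delta}$ the lemma demands for a clique of size $n$. Your attempted reduction ``only one neighbor $c$ in $\Cbar_i\Rightarrow j\in L$'' is also wrong: if $c\in C_i\setminus\{i\}$, the fact that $c$ has a unique $S$-neighbor (namely $i$) says nothing about how many $S$-neighbors $j$ has.

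What the paper does, and what your sketch is missing, is this. For $q\in Q_i$ one first shows that the set of \emph{protective vertices} $P_q=(N_G(q)\cap S)\setminus\{i\}$ is nonempty, and then runs three cases. Either $q$ is fully connected to $\Cbar_i$ (Case~1), or fully connected to some $\Cbar_{p_s}$ with $p_s\in P_q$ (Case~2) --- in both cases \cref{lem:cont0} applies directly with the other clique supplying the extra neighbor --- or $q$ is fully connected to none of them (Case~3). The decisive point is that Case~3 is \emph{impossible}: choosing one non-neighbor of $q$ from each $\Cbar_{p_s}$ and one from $\Cbar_i$, adjoining $q$ itself, and keeping $S\setminus(P_q\cup\{i\})$, produces an independent set of size $|S|+1$, contradicting maximality of $S$. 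This impossibility argument is the missing ingredient; without it you cannot guarantee full connection to any clique, and hence cannot invoke \cref{lem:cont0} at all.
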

\begin{proof}
Recall from \cref{alg:ics} that $S = \{1,2,\hdots,{|S|}\}$ is a maximum indepdendent set of $G$, $L = \{l \in V| \ l {\rm \ has \ only \ one \ neighbor \ in} \ S\}$. Observe that $C_{i} = N_{\Gbar}(i) \cap L$, $\Cbar_{i} = C_{i} \cup \{i\}$ and $Q_{i} = N_{\Gbar}(\Cbar_{i})$ get defined with respect to $\Gbar$ at the start of iteration $i$ in \cref{line:C_def}. We will show that after $|S|$ iterations, all the vertices satisfy the LCP$_\delta(G)$ conditions \cref{eqn:lcp1,eqn:lcp2,eqn:lcp3} for $\delta \in [\gamma(G),1)$. 

We first make a few preliminary observations.
\begin{enumerate}
\item[(a)]  The set of vertices removed  in \cref{alg:ics} $R = \bigcup_{i  = 1}^{|S|} (\Cbar_{i} \cup Q_{i}) $ equals $V$ by \cref{rem:algremV}. Thus $V$ can be divided into two disjoint sets, $V=R_C \cup R_Q$ where
$R_C \coloneqq \bigcup_{i  = 1}^{|S|} \Cbar_{i}$ and $R_Q \coloneqq \bigcup_{i  = 1}^{|S|} Q_{i}$. Moreover, $Q_i$'s are disjoint because if vertex $j$ is chosen after vertex $i$ in the for loop, $Q_{j} \subseteq V \backslash (Q_{i} \cup\Cbar_{i})$. Similarly $\Cbar_i$'s are disjoint.
\item[(b)]  We have that $R_Q \cap S = \emptyset$. This can be seen as follows. For some $i \in S$, let $q \in Q_{i}$. Now, $Q_{i} \subset N_G(i) \cup N_G(C_{i})$. If $q \in N_G(i)$, it cannot be an element of $S$ since $S$ is an independent set. Suppose $q \in N_G(c)$ for some $c \in C_{i}$. Now  recall that $N_G(c) \cap S = \{i\}$ by definition of $C_{i}$, whereby $q \not\in S$ unless $q = i$, which is not true by definition of $Q_i$. Hence, $R_Q \cap S = \emptyset$.
\item[(c)] \cref{alg:ics} returns $x_{j}  = 0 \ \forall j \in R_Q$. To prove this note that each $x_j$ for $j \in R_Q$ is initialized as 0 since $R_Q \cap S=\emptyset$. Furthermore, in each iteration, we assign non-zero values only to vertices in $R_C$.  
\item[(d)] Let $c \in \Cbar_i$ for $1\leq i\leq |S|$. \cref{alg:ics} returns $x_j = 0 \ \forall j \in N_{G}(c) \backslash N_{\Cbar_{i}}(c)$. To see this note that $N_{G}(c)$ $\backslash N_{\Cbar_{i}}(c)\subset N_{G}(\Cbar_{i})$ and since all cliques are independent, we have $N_{G}(\Cbar_{i}) \subset R_Q$. The claim then follows from part (c).
\item[(e)] Consider $q \in Q_{i}$ for some $1\leq i \leq |S|$. For the vertex $q$, define its set of ``protective vertices''  as $P_{q} \coloneqq (N_{G}(q) \cap S) \backslash \{i\}$. Then $P_{q}$ is always non-empty. The definition of $P_q$ is valid because $Q_j$'s are disjoint. 
To show $P_q\neq \emptyset$ consider two cases: if $q \in N_G(i)$ then since $q\notin \Cbar_{i}$,  $q$ must have at least two neighbors in $S$, one of which is $i$. On the other hand, if $q \in Q_{i} \backslash N_G(i) $, it needs to have at least one neighbor in $S$ for $S$ to be a maximum independent set and since $q$ is not a neighbor of $i$, 
there must be a neighbor of $q$ in $S\backslash \{i\}$. 
\end{enumerate}

Now we claim that, with the returned value of $x$ conditions  \cref{eqn:lcp1,eqn:lcp2,eqn:lcp3} are satisfied for all the vertices in $V$. First we show this for vertices in $R_C$. 
For any vertex $c \in \Cbar_{i}$, $1\leq i\leq |S|$, for the $x$ returned by \cref{alg:ics}, we have 
\begin{align}
 \C_c(x) &= x_{c} + \delta \sum_{j \in N_{G}(c)}x_j \label{eqn:cliqexistcomp1}\\
 &= x_{c} + \delta \sum_{j \in N_{\Cbar_{i}}(c)}x_j + \delta \sum_{j \in N_{G}(c) \backslash N_{\Cbar_{i}}(c)}x_j\label{eqn:cliqexistcomp2}\\
& = \frac{1}{1+(|\Cbar_{i}|-1)\delta} + \frac{\delta(|\Cbar_{i}|-1)}{1+(|\Cbar_{i}|-1)\delta} + 0 = 1 \label{eqn:cliqexistcomp}
\end{align}
\cref{eqn:cliqexistcomp} is justified because of (d) above.
 Thus, for any vertex $c \in \Cbar_{i}$, $\C_c(x) = 1$ and the LCP conditions \cref{eqn:lcp1,eqn:lcp2,eqn:lcp3} are satisfied. It follows that \cref{eqn:lcp1,eqn:lcp2,eqn:lcp3} holds for all vertices in $R_C$.

We now show that \cref{eqn:lcp1,eqn:lcp2,eqn:lcp3} hold for all vertices in $R_Q$. Let $q \in Q_i$ for $1\leq i \leq |S|$ and let $P_{q} = \{p_1,p_2,\cdots,p_r\}$ be its set of protective vertices. $P_q \neq \emptyset$ by (e) above. We now have the following cases:\\

\noindent \textit{Case 1:} $\Cbar_{i} \subset N_{G}(q)$. \\
Recall that $i$ is not in the set of protective vertices. Since $\Cbar_{i} \subset N_{G}(q)$ we have that $q$ is fully connected to $\Cbar_i$ and adjacent to at least one protective vertex, say $p_s \in P_q$. \cref{alg:ics} assigns $x_{p_s} = \frac{1}{1+(|\Cbar_{p_s}|-1)\delta}$ and $x_t=\frac{1}{1+(|\Cbar_i|-1)\delta}$ for all $t \in \Cbar_i.$ Thus using the Two Clique Lemma (\cref{lem:cont0}) we get that for $\delta \in [\gamma(G),1)$, $x$ satisfies the LCP conditions \cref{eqn:lcp1,eqn:lcp2,eqn:lcp3} for vertex $q$. \\

\noindent \textit{Case 2:} $\Cbar_{p_s} \subset N_{G}(q)$ for some $s \in  \{1,2,\hdots,r\}$. \\
Since $\Cbar_{p_s} \subset N_{G}(q)$ we have that $q$ is fully connected to $\Cbar_{p_s}$ and adjacent to some vertex $j \in \Cbar_i$. \cref{alg:ics} assigns $x_t = \frac{1}{1+(|\Cbar_{p_s}|-1)\delta} \ \forall t \in \Cbar_{p_s}$ and  $x_{j} =  \frac{1}{1+(|\Cbar_{i}|-1)\delta}$. Thus,  using the Two Clique Lemma (\cref{lem:cont0}) for $\delta \in [\gamma(G),1)$, $x$  satisfies the LCP conditions \cref{eqn:lcp1,eqn:lcp2,eqn:lcp3} for $q$. \\

\noindent \textit{Case 3:} $\Cbar_{p_s} \backslash N_{G}(q) \neq \emptyset \ \forall s \in  \{1,2,\cdots,r\}$ and $\Cbar_{i} \backslash N_{G}(q) \neq \emptyset$.\\
We will show that this case is not possible. Define $D_{s} = \Cbar_{p_s} \backslash N_{G}(q) \ \forall s \in  \{1,2,\cdots,$ $r\}$. Note that all $D_{s}, s=1,\hdots,r$ are independent cliques  since $D_{s} \subset \Cbar_{p_s}$ and each $\Cbar_{p_s}$ are independent cliques from \cref{lem:complete}.

Now, choose a maximal independent set $D  = \{d_1,d_2,\cdots,d_r\}$ of $G_{\cup_{s = 1}^{r}D_{s}}$. Each vertex in $D$ is from a different clique $D_{s}$ and there is one vertex from each clique; thus, $|D| = r$. Also, let $l \in \Cbar_{i} \backslash N_{G}(q)$. We define 
\[
S'' \coloneqq \Stilde \cup D \cup \{q,l\}, \quad \Stilde := S \backslash (P_{q}\cup \{i\}),
\]
and show that $S''$ is an independent set with  $|S''| > |S|$. To see that $S''$ is an independent set, we check the independence of each of the pairs of sets in the union above. 
$\Stilde$ and $D$ are indepdendent because $D \subset L \cap N_G(P_{q})$ and hence $N_G(D) \cap S \subset P_{q}$ whereby $N_G(D)\cap \Stilde = \emptyset$. $\Stilde$ and $q$ are indepdendent because $N_G(q) \cap S \subset P_{q}\cup \{i\}$. Finally, $N_G(l) \cap S = \{i\}$ by definition whereby $\Stilde$ and $l$ are independent. Also, $D \cap N_{G}(q) = \emptyset$ by definition and $D \cap N_{G}(l) \subset = \emptyset$ since $\Cbar_t, t=1,\hdots,|S|$ are independent. Finally, $N_G(q) \cap \{l\} = \emptyset$ by definition. Thus, $S''$ is an independent set. Moreover, $|S''| > |S|$, which gives a contradiction since $S$ was assumed to be a maximum independent set. Hence, \textit{Case 3} is not possible.

Thus, since $q$ was chosen to be an arbitrary vertex in $R_Q$, the LCP conditions \cref{eqn:lcp1,eqn:lcp2,eqn:lcp3} are satisfied for all vertices in $R_Q$. Thus, we have exhausted all cases and constructively shown the existence of an independent clique solution in any graph.
\end{proof}

\begin{example}
In this example, we discuss the tightness of the bound $\delta \geq \gamma(G)$ in \cref{thm:cliqexist}. We will show that for a path of length $4$ as shown in \cref{fig:tight_gamma}, the vector returned by \cref{alg:ics} is not a solution when $\delta < \gamma(G) = \frac{\sqrt{5}-1}{2}$. \cref{alg:ics} returns $x =\Big(1,0,\frac{1}{1+\delta},\frac{1}{1+\delta}\Big)$. Now,  $\sigma(x)$ is a union of independent cliques whose vertex sets are given by $\{1\}$ and $\{3,4\}$ respectively. For $x$ to be an ICS, $\C_2(x) \geq 1$. Now, $\C_2(x) = \delta \Big(1 + \frac{1}{1+\delta} \Big) < 1$ for $0 < \delta < \frac{\sqrt{5}-1}{2}$. This example shows that there are cases where $\delta \in [\gamma(G),1)$ is not only sufficient but also necessary. 
\end{example}


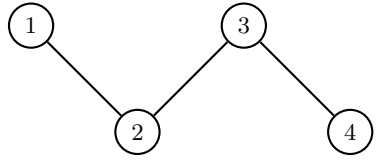
\begin{figure}
\centering
\begin{tikzpicture}[auto, node distance=2cm, every loop/.style={},
                    thick,main node/.style={circle,draw,font=\sffamily\small\bfseries}]

  \node[main node] (1) {$1$};
  \node[main node] (2) [below right of=1] {$2$};
  \node[main node] (3) [above right of=2] {$3$};
  \node[main node] (4) [below right of=3] {$4$};

  \path[every node/.style={font=\sffamily\small}]
    (1) edge node [left] {} (2)
    (2) edge node [right] {} (3)
    (3) edge node [right] {} (4);
\end{tikzpicture}
\caption{An example graph showing tightness of $\gamma(G)$.}
\label{fig:tight_gamma}
\end{figure}

\cref{thm:cliqexist} proves that for a given $\delta \in [\gamma(G),1)$, an ICS (say $x_\delta$) exists. 
We now show that there exists an ICS of $\LCP_{\delta'}(G)$ where $\delta' \geq \delta$, such that its support is same as that of $x_\delta$. To see this, we  note that given a set of independent cliques $\mathcal{K} = \{C_1,\dots,\C_n\}$, the only possible ICS for some $\delta \in [\gamma(G),1)$ with its support as $\mathcal{K}$ is $x^\mathcal{K}(\delta)$ given by
\begin{equation}\label{eqn:candidateICS}
 x^\mathcal{K}_i(\delta) = \begin{cases}
                    \frac{1}{1 + (|C_j|-1)\delta}, \ & \ {\rm if} \ i \in C_j, \ C_j \in \mathcal{K}\\
                    0, \ & \ {\rm otherwise}.
                   \end{cases}
\end{equation}
This gives us the following result.
\begin{corollary}\label{cor:valid}
 For a given set of independent cliques $\mathcal{K} = \{C_1,\dots,C_n\}$, if $x^{\Kscr}$ given by \cref{eqn:candidateICS} is an ICS for some $\delta > 0$, then there exists an ICS with the same support $\mathcal{K}$ for any $\delta' \in [\delta,1)$.
\end{corollary}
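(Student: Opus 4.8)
The plan is to verify directly that the vector $x^{\mathcal{K}}(\delta')$ obtained from \cref{eqn:candidateICS} by substituting $\delta'$ for $\delta$ is an ICS of $\LCP_{\delta'}(G)$. First I would observe that for every $\delta' > 0$ each entry $\frac{1}{1+(|C_j|-1)\delta'}$ is strictly positive, so $\sigma(x^{\mathcal{K}}(\delta')) = \bigcup_j C_j$, which is a union of independent cliques by hypothesis; thus it suffices to check that $x^{\mathcal{K}}(\delta')$ satisfies \cref{eqn:lcp1,eqn:lcp2,eqn:lcp3}. Nonnegativity \cref{eqn:lcp1} is immediate. For $c \in C_j$, the computation is formally identical to \cref{eqn:cliqexistcomp1,eqn:cliqexistcomp2,eqn:cliqexistcomp} in the proof of \cref{thm:cliqexist}: the cliques being independent, every neighbor of $c$ lying outside $C_j$ carries zero weight, so
\[
\C_c(x^{\mathcal{K}}(\delta')) = \frac{1}{1+(|C_j|-1)\delta'} + \frac{(|C_j|-1)\delta'}{1+(|C_j|-1)\delta'} = 1,
\]
which verifies both \cref{eqn:lcp2} and the complementarity \cref{eqn:lcp3} at $c$. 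Since for a vertex $q \notin \bigcup_j C_j$ we have $x^{\mathcal{K}}_q(\delta') = 0$, condition \cref{eqn:lcp3} is automatic there, and the whole proof reduces to establishing $\C_q(x^{\mathcal{K}}(\delta')) \ge 1$ for such $q$.

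This last inequality is the only substantive step, and it follows from a monotonicity observation. Writing $m_j := |N_G(q) \cap C_j|$ (zero for all but finitely many $j$) and $s_j := |C_j|$, and using once more that the cliques are independent so that the only nonzero contributions to $\C_q$ come from clique vertices, one has
\[
\C_q(x^{\mathcal{K}}(t)) = t \sum_j \frac{m_j}{1+(s_j-1)t} = \sum_j m_j\, g_{s_j}(t), \qquad g_s(t) := \frac{t}{1+(s-1)t},
\]
for $t = \delta$ and $t = \delta'$ alike. Because $x^{\mathcal{K}}(\delta)$ is assumed to be an ICS of $\LCP_\delta(G)$ and $x^{\mathcal{K}}_q(\delta) = 0$, condition \cref{eqn:lcp2} at $q$ reads $\sum_j m_j\, g_{s_j}(\delta) \ge 1$. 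An elementary differentiation gives $g_s'(t) = (1+(s-1)t)^{-2} > 0$, so each $g_s$ is strictly increasing on $(0,\infty)$; hence $g_{s_j}(\delta') \ge g_{s_j}(\delta)$ whenever $\delta' \ge \delta$, and therefore $\C_q(x^{\mathcal{K}}(\delta')) = \sum_j m_j\, g_{s_j}(\delta') \ge \sum_j m_j\, g_{s_j}(\delta) \ge 1$. Since this holds for every $q$ outside the support, $x^{\mathcal{K}}(\delta')$ is an ICS with support $\mathcal{K}$ for all $\delta' \in [\delta,1)$.

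I do not expect a genuine obstacle: once one notes that the on-support equations $\C_c = 1$ hold with the same form for every value of the parameter and that each off-support term $m_j\, g_{s_j}(\cdot)$ is increasing, the claim is immediate. The only points needing care are the bookkeeping of multiplicities (a vertex $q$ may be adjacent to several vertices of one clique, contributing $m_j \ge 1$ copies of $g_{s_j}$) and the repeated appeal to independence of the cliques from \cref{lem:complete}, which is exactly what guarantees that neither a neighbor of a support vertex nor a neighbor of $q$ beyond those counted in the $m_j$'s can carry positive weight.
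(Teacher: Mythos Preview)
Your proof is correct and follows essentially the same approach as the paper: both verify that the candidate $x^{\mathcal{K}}(\delta')$ satisfies \cref{eqn:lcp1,eqn:lcp2,eqn:lcp3} by computing $\C_c=1$ on the support exactly as in \cref{eqn:cliqexistcomp1,eqn:cliqexistcomp2,eqn:cliqexistcomp}, and then observing that the off-support quantity $\C_q(x^{\mathcal{K}}(t))=t\sum_j m_j/(1+(s_j-1)t)$ is increasing in $t$. One small slip: the independence of the cliques here is part of the hypothesis of the corollary, not a consequence of \cref{lem:complete}, which pertains only to the specific cliques output by \cref{alg:ics}.
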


\begin{proof}
 Consider $x^\mathcal{K}(\cdot):[\delta,1)\rightarrow [0,1]^{|V|}$ such that,
 \begin{equation}\label{eqn:icsfundef}
   x^\mathcal{K}_i(\delta') = \begin{cases}
                    \frac{1}{1 + (|C_j|-1)\delta'}, \ & \ {\rm if} \ i \in C_j, \ C_j \in \mathcal{K}\\
                    0, \ & \ {\rm otherwise},
                   \end{cases}  
 \end{equation}
 for $\delta' \in [\delta,1)$. Then, we claim that for any given $\delta' \in [\delta,1)$, $x^\mathcal{K}(\delta')\in {\rm SOL}_{\delta'}(G)$. 
 
We know that for $\delta' = \delta$, $x^\mathcal{K}(\delta)$ is same as the one defined in \cref{eqn:candidateICS} and $x^\mathcal{K}(\delta) \in {\rm SOL}_{\delta}(G)$. Thus, for $c \in C_i$ for some $C_i \in \mathcal{K}$, $\C_c(x^\mathcal{K}(\delta)) = 1$ following the analysis in \cref{eqn:cliqexistcomp1,eqn:cliqexistcomp2,eqn:cliqexistcomp}. Since, the analysis in \cref{eqn:cliqexistcomp1,eqn:cliqexistcomp2,eqn:cliqexistcomp} is independent of the value of $\delta$, $\C_c(x^\mathcal{K}(\delta')) = 1$ for $c \in C_i$ for some $C_i \in \mathcal{K}$ for any $\delta' \in [\delta,1)$.
 
 For any vertex $j \in V \backslash \mathcal{K}$, let $|N_G(k) \cap C_i| = n_i$ $\forall i$ such that $C_i \in \mathcal{K}$. Then, 
 \[
  \C_j(x^\mathcal{K}(\delta')) = \delta'\sum_{C_i \in \mathcal{K}}^{} \frac{n_i}{1 + (|C_i|-1)\delta'}.
 \]
We know that $\C_j(x^\mathcal{K}(\delta)) \geq 1$ since $x^\mathcal{K}(\delta) \in {\rm SOL}_{\delta}(G)$. Since,  $\C_j(x^\mathcal{K}(\delta'))$ is an increasing function in $\delta'$, $\C_j(x^\mathcal{K}(\delta')) \geq 1, \ \forall \delta' > \delta$. Thus, $x^\mathcal{K}(\delta') \in {\rm SOL}_{\delta'}(G)$ for $\delta' \in [\delta,1)$.
\end{proof}

\begin{remark}\label{rem:algo_funsol}
 Consider a maximum independent set $S = \{1,\dots,{|S|}\}$ and a union of independent cliques $\mathcal{K} = \{\Cbar_{1},\dots,\Cbar_{{|S|}}\}$ obtained from \cref{alg:ics} and $x^\mathcal{K}(\cdot):[\gamma(G),1)\rightarrow [0,1]^{|V|}$ as defined in \cref{eqn:icsfundef}. Then, $x^\mathcal{K}(\delta)\in {\rm SOL}_{\delta}(G)$ for $\delta \in [\gamma(G),1)$. This follows since \cref{thm:cliqexist} shows that the output of \cref{alg:ics} viewed as a function of $\delta$ is a valid ICS of $\LCP_\delta(G)$ for any $\delta \in [\gamma(G),1)$.
\end{remark}


We now show that there exists a set of independent cliques $\mathcal{K}$ with $|\Kscr|=\alpha(G)$, such that $x^\mathcal{K}(\delta)$ is a solution of maxICS$_\delta(G)$ for $\delta \in [\eta(G),1)$. Here, $\eta(G)$ is defined as,
\begin{equation}\label{eqn:etadef}
 \eta(G) =  \max\Big{\{}\frac{\omega(G)-3 + \sqrt{(\omega(G)-3)^2 + 4(\omega(G)-1)}}{2(\omega(G)-1)},\frac{\alpha(G)(\omega(G)-1) - \omega(G)}{\alpha(G)(\omega(G)-1)}\Big{\}}.
\end{equation}

\begin{proposition}\label{prop:cliqmaxind}
 For $\delta \in [\eta(G),1)$, there exists a union of independent cliques $\Kscr$ and a solution   $x^\mathcal{K}(\delta)$ of {\rm maxICS}$_\delta(G)$  such that $|\mathcal{K}| = \alpha(G)$.
\end{proposition}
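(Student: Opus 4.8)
The plan is to compare the ICS produced by \cref{alg:ics} with an arbitrary optimizer of {\rm maxICS}$_\delta(G)$ and show that the former is never worse. First observe that the feasible set ICS$_\delta(G)$ is finite: by the discussion preceding \cref{eqn:candidateICS}, an ICS is completely determined by its support (a union of independent cliques) via $x^{\mathcal{K}}(\delta)$, and there are only finitely many such supports; hence {\rm maxICS}$_\delta(G)$ attains its optimum. We may assume $\omega(G)\geq 2$ (otherwise $G$ has no edges, $\LCP_\delta(G)$ has the unique solution $\bfe$, whose support is a union of $\alpha(G)$ singleton cliques, and the claim is trivial), so that $\gamma(G)$ and $\eta(G)$ are well defined; since $\eta(G)=\max\{\gamma(G),\cdot\}\geq\gamma(G)$, \cref{thm:cliqexist} together with \cref{rem:algo_funsol} guarantees that for every $\delta\in[\eta(G),1)$ the family $\mathcal{K}_0=\{\Cbar_1,\dots,\Cbar_{\alpha(G)}\}$ returned by \cref{alg:ics} yields a valid ICS $x^{\mathcal{K}_0}(\delta)$ with exactly $\alpha(G)$ cliques.

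Next I would record the basic estimate on the objective. For any family of independent cliques $\mathcal{K}=\{C_1,\dots,C_k\}$,
\[
\bfe\t x^{\mathcal{K}}(\delta)=\sum_{j=1}^{k}\frac{|C_j|}{1+(|C_j|-1)\delta}=\sum_{j=1}^{k}g(|C_j|),\qquad g(m):=\frac{m}{1+(m-1)\delta}.
\]
Since $g'(m)=\frac{1-\delta}{(1+(m-1)\delta)^2}>0$ for $\delta<1$, the function $g$ is increasing on $[1,\infty)$, so $1=g(1)\leq g(m)\leq g(\omega(G))$ for every clique size $1\leq m\leq\omega(G)$. Also, choosing one vertex from each clique of $\mathcal{K}$ yields an independent set of size $k$, so $k\leq\alpha(G)$ for every union of independent cliques. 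Hence for any ICS with family $\mathcal{K}$ of size $k$,
\[
k\ \leq\ \bfe\t x^{\mathcal{K}}(\delta)\ \leq\ k\cdot\frac{\omega(G)}{1+(\omega(G)-1)\delta},
\]
and in particular $\bfe\t x^{\mathcal{K}_0}(\delta)\geq\alpha(G)$.

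Now let $x^{\mathcal{K}^*}(\delta)$ be an optimizer of {\rm maxICS}$_\delta(G)$ with associated clique family $\mathcal{K}^*$. If $|\mathcal{K}^*|=\alpha(G)$ we are done with $\mathcal{K}=\mathcal{K}^*$. Otherwise $|\mathcal{K}^*|\leq\alpha(G)-1$, so the estimate above gives $\bfe\t x^{\mathcal{K}^*}(\delta)\leq(\alpha(G)-1)\cdot\frac{\omega(G)}{1+(\omega(G)-1)\delta}$. Clearing the positive denominator, the inequality $(\alpha(G)-1)\cdot\frac{\omega(G)}{1+(\omega(G)-1)\delta}\leq\alpha(G)$ is equivalent to $\delta\geq\frac{\alpha(G)(\omega(G)-1)-\omega(G)}{\alpha(G)(\omega(G)-1)}$, which holds since this quantity is one of the two terms in the maximum defining $\eta(G)$ and $\delta\geq\eta(G)$. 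Therefore $\bfe\t x^{\mathcal{K}^*}(\delta)\leq\alpha(G)\leq\bfe\t x^{\mathcal{K}_0}(\delta)$; as $x^{\mathcal{K}_0}(\delta)$ is feasible for {\rm maxICS}$_\delta(G)$ it is then also an optimizer, and $|\mathcal{K}_0|=\alpha(G)$, so $\mathcal{K}=\mathcal{K}_0$ works.

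The argument is mostly bookkeeping; the one place that must be gotten exactly right is matching the needed inequality $(\alpha(G)-1)g(\omega(G))\leq\alpha(G)$ to the second term of $\eta(G)$, together with the qualitative point behind it: because each independent clique contributes at most $g(\omega(G))$ to the $\ell_1$ norm, an ICS built from fewer than $\alpha(G)$ cliques — no matter how large those cliques are — cannot beat one built from $\alpha(G)$ cliques once the ``loss'' $1-\delta$ is small enough. The only other thing to check is that the family $\mathcal{K}_0$ output by \cref{alg:ics} stays valid across the whole interval $[\eta(G),1)$, which is precisely what \cref{cor:valid} and \cref{rem:algo_funsol} supply.
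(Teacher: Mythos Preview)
Your proof is correct and follows essentially the same approach as the paper: bound the $\ell_1$ norm of any ICS with $k<\alpha(G)$ cliques above by $(\alpha(G)-1)\frac{\omega(G)}{1+(\omega(G)-1)\delta}$, bound the $\ell_1$ norm of the ICS produced by \cref{alg:ics} below by $\alpha(G)$, and use the second term of $\eta(G)$ to compare. You add a few cosmetic details (finiteness of the feasible set, the edgeless case, the explicit observation that $k\leq\alpha(G)$), but the core argument is identical.
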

\begin{proof}
We show that for any two sets of independent cliques $\mathcal{K}_1 = \{C^1_1,\dots,C^1_{|\mathcal{K}_1|}\}$ and $\mathcal{K}_2 = \{C^2_1,\dots,C^2_{|\mathcal{K}_2|}\}$ such that $|\mathcal{K}_1| = \alpha(G)$ and $|\mathcal{K}_1| > |\mathcal{K}_2|$, we have $||x^{\mathcal{K}_1}(\delta)||_1 \\ \geq ||x^{\mathcal{K}_2}(\delta)||_1$ for all $\delta \in [\eta(G),1)$. Note that such a $\mathcal{K}_1$ exists, since $\eta(G) \geq \gamma(G)$, by using \cref{rem:algo_funsol}. Also, $||x^{\mathcal{K}_t}(\delta)||_1 = \sum_{i = 1}^{|\mathcal{K}_t|}\frac{|C^t_i|}{1 + (|C^t_i|-1)\delta}$ for $t = 1,2$. Each term in the sum is increasing with $|C^t_i|$. Thus, the least value of $||x^{\mathcal{K}_1}(\delta)||_1$ is when $|C^1_i| = 1, \forall C^1_i \in \mathcal{K}_1$, which is $\alpha(G)$. Similarly, the maximum value of $||x^{\mathcal{K}_2}(\delta)||_1$ is when $|C^2_i| = \omega(G), \forall C^2_i \in \mathcal{K}_2$ and $|\mathcal{K}_2| = \alpha(G) - 1$, which is $\frac{\omega(G)(\alpha(G)-1)}{1 + (\omega(G)-1)\delta}$. It is easy to see that 

$$\delta \geq \kappa(G) \coloneqq \frac{\alpha(G)(\omega(G)-1) - \omega(G)}{\alpha(G)(\omega(G)-1)} \iff \alpha(G) \geq \frac{\omega(G)(\alpha(G)-1)}{1 + (\omega(G)-1)\delta}.$$

Thus, $|x^\mathcal{K}_1(\delta)| \geq |x^\mathcal{K}_2(\delta)|$ for any $\mathcal{K}_2$ with $|\mathcal{K}_2| < \alpha(G)$. Hence, there must exist a solution of maxICS$_\delta(G)$ with its support as a union of $\alpha(G)$ independent cliques.
\end{proof}

\begin{example}
In this example, we discuss the tightness of the bound $\delta \geq \eta(G)$. Note that $\eta(G)$ is defined as the maximum amongst two terms, the first of which is $\gamma(G)$. This term comes in because we assume the existence of a set of indepdendent cliques $\mathcal{K}_1$ such that $|\mathcal{K}_1| = \alpha(G)$ for which we refer to \cref{thm:cliqexist} which requires $\delta \in [\gamma(G),1)$ and is tight for certain graphs. The second term in the maximum function in the definition of $\eta(G)$ is $\kappa(G)$. We now show the tightness of this term with the graph in \cref{fig:tight_eta} as an example. Note that $\omega(G) = 2$ and $\alpha(G) = 6$, hence $\gamma(G) = \frac{\sqrt{5} - 1}{2}$ and $\kappa(G) = \frac{2}{3}$ giving $\eta(G) = \kappa(G)$. The maximum independent set is $\{1,3,5,7,9,11\}$. Also, $\mathcal{K}_1 = \{1,3,5,7,9,11\}$ and consider $\mathcal{K}_2 = \{\{1,2\},\{4,5\},\{6,7\},\{8,9\},\{10,11\}\}$. $||x^{\mathcal{K}_1}_\delta(G)||_1 = 6$ and the solution is valid for $\delta \geq \frac{1}{2}$. Also, $||x^{\mathcal{K}_2}_\delta(G)||_1 = \frac{10}{1+\delta}$ and it is valid for $\delta \geq \frac{1}{4}$. Now, for $\delta < \frac{2}{3} = \frac{\alpha(G)(\omega(G)-1) - \omega(G)}{\alpha(G)(\omega(G)-1)}$, we have $\frac{10}{1+\delta} > 6$ and hence $||x^{\mathcal{K}_2}_\delta(G)||_1 > ||x^{\mathcal{K}_1}_\delta(G)||_1$. But we want $||x^{\mathcal{K}_2}_\delta(G)||_1 \leq ||x^{\mathcal{K}_1}_\delta(G)||_1$ in \cref{prop:cliqmaxind} for which it is not only sufficient but also necessary to have $\delta \geq \frac{\alpha(G)(\omega(G)-1) - \omega(G)}{\alpha(G)(\omega(G)-1)}$ in this example. Thus, since the both the terms over which the maximization takes place in the definition of $\eta(G)$ is tight, it is a tight bound for \cref{prop:cliqmaxind}.

\end{example}

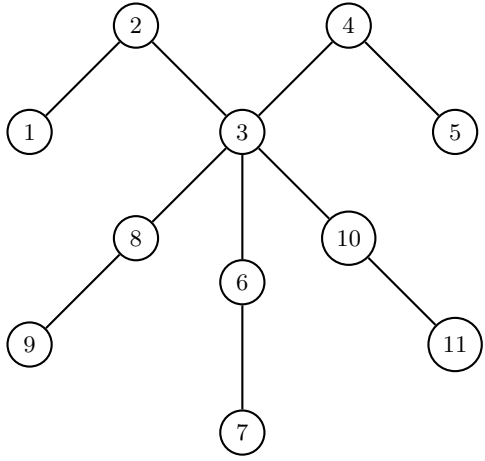
\begin{figure}
 \centering
 \begin{tikzpicture}[auto, node distance=2cm, every loop/.style={},
                    thick,main node/.style={circle,draw,font=\sffamily\small\bfseries}]

  \node[main node] (1) {$1$};
  \node[main node] (2) [above right of=1] {$2$};
  \node[main node] (3) [below right of=2] {$3$};
  \node[main node] (4) [above right of=3] {$4$};
  \node[main node] (5) [below right of=4] {$5$};
  \node[main node] (6) [below of=3] {$6$};
  \node[main node] (7) [below of=6] {$7$};
  \node[main node] (8) [below left of=3] {$8$};
  \node[main node] (9) [below left of=8] {$9$};
  \node[main node] (10) [below right of=3] {$10$};
  \node[main node] (11) [below right of=10] {$11$};

  \path[every node/.style={font=\sffamily\small}]
    (1) edge node [right] {} (2)
    (2) edge node [right] {} (3)
    (3) edge node [right] {} (4)
    (4) edge node [right] {} (5)
    (3) edge node [right] {} (6)
    (3) edge node [right] {} (8)
    (3) edge node [right] {} (10)
    (6) edge node [right] {} (7)
    (9) edge node [right] {} (8)
    (11) edge node [right] {} (10);
\end{tikzpicture}
\caption{An example graph showing tightness of $\eta(G)$.}
\label{fig:tight_eta}
\end{figure}

\subsection{Monotonicity of the Maximum $\ell_1$ Norm amongst ICSs}\label{subsec:monotonic}

In this section, we show that the solutions of {\rm maxICS}$_\delta(G)$ for some $\delta \in [\eta(G),1)$ are monotonic with respect to the induced subgraph relation. We first prove a lemma which gives conditions under which a graph $G'$ formed by adding a vertex to a graph $G$ admits an ICS $(x^\mathcal{K}(\delta),0)$ where $x^\mathcal{K}(\delta)$ is an ICS of $G$ and $\mathcal{K}$ are a set of independent cliques in $G$. In other words, an ICS of $G'$ can be formed by appending a 0 corresponding to the new vertex to an ICS of $G$.

\begin{lemma}\label{lem:staysol}
Let $x^\mathcal{K}(\delta) \in \textup{ICS}_\delta(G)$ where $\mathcal{K}$ are a set of independent cliques in $G$. Let $ \delta \in [\gamma(G'),1)$, where $G'$ is such that 
\begin{enumerate}
 \item $V(G') = V(G) \cup \{i\}$,
 \item $\exists C \in \mathcal{K}$ s.t. $(i,j) \in E(G') \ \forall j \in C$,
 \item $\exists C' \in \mathcal{K}, C' \neq C$ s.t. $(i,j) \in E(G')$ for some  $j \in C'$.
\end{enumerate}
Then $x^\mathcal{K}(\delta)$ defined as
\[
x^\mathcal{K}_j(\delta) =
  \begin{cases}
   x^\mathcal{K}_j(\delta)        & {\rm if }\ j \in V(G), \\
   0         & {\rm if } \ j \ = i.
  \end{cases}
\]
is an ICS of \textup{LCP}$_\delta(G')$.
\end{lemma}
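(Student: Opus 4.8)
The plan is to show directly that the vector $\xhat$ --- defined by $\xhat_j = x^\mathcal{K}_j(\delta)$ for $j \in V(G)$ and $\xhat_i = 0$ --- is a solution of $\LCP_\delta(G')$ whose support is a union of independent cliques of $G'$. There is essentially nothing to do for the support: passing from $G$ to $G'$ creates no edges among the vertices of $V(G)$, so every member of $\mathcal{K}$ stays a clique and any two members stay independent in $G'$; since $\sigma(\xhat) = \sigma(x^\mathcal{K}(\delta)) = \bigcup_{C \in \mathcal{K}} C$, the support of $\xhat$ is a union of independent cliques of $G'$. It then remains to verify \cref{eqn:lcp1,eqn:lcp2,eqn:lcp3} at every vertex of $G'$.

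For a vertex $j \in V(G)$ I would observe that, since $\xhat_i = 0$, the possible new neighbour $i$ contributes nothing to the discounted closed-neighbourhood sum, so $\C_j$ computed in $G'$ at $j$ equals $\C_j$ computed in $G$ at $j$; as $x^\mathcal{K}(\delta) \in \textup{ICS}_\delta(G)$, conditions \cref{eqn:lcp1,eqn:lcp2,eqn:lcp3} already hold at $j$ in $G$ and hence in $G'$. For the new vertex $i$, conditions \cref{eqn:lcp1} and \cref{eqn:lcp3} are immediate from $\xhat_i = 0$, so the only real point is $\C_i(\xhat) \ge 1$ (\cref{eqn:lcp2}). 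Here I would invoke the Two Clique Lemma (\cref{lem:cont0}) for the graph $G'$ and the vertex $i$, taking $C_1 := C$ and $C_2 := C'$: hypothesis~(2) of the statement gives $C \subset N_{G'}(i)$, hypothesis~(3) gives $V(C') \cap N_{G'}(i) \neq \emptyset$, and $C \cap C' = \emptyset$ because $C$ and $C'$ are distinct members of the independent-clique family $\mathcal{K}$. By \cref{eqn:candidateICS}, $\xhat$ restricted to $C$ and to $C'$ takes exactly the values $\tfrac{1}{1+(|C|-1)\delta}$ and $\tfrac{1}{1+(|C'|-1)\delta}$ required in \cref{lem:cont0}, and $\xhat \ge 0$ everywhere. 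Since $\delta \in [\gamma(G'),1)$ by assumption, \cref{lem:cont0} gives \cref{eqn:lcp1,eqn:lcp2,eqn:lcp3} at $i$. Combining the two cases, $\xhat \in \textup{ICS}_\delta(G')$.

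The only step needing a word of care --- and the only "content" beyond bookkeeping --- is recognising that the ICS of $G$ appearing in the hypothesis is forced to be the canonical vector of \cref{eqn:candidateICS}, so that its restrictions to $C$ and $C'$ carry precisely the normalisation the Two Clique Lemma expects; this is exactly the point observed just before \cref{cor:valid}, so no new argument is needed. I do not expect any genuine obstacle: the lemma is essentially a repackaging of \cref{lem:cont0} for the "add one vertex" operation, arranged so that it can feed the monotonicity argument of \cref{subsec:monotonic}.
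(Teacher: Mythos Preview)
Your proposal is correct and takes essentially the same approach as the paper: both reduce the only nontrivial verification---the LCP condition \cref{eqn:lcp2} at the new vertex $i$---to a direct application of the Two Clique Lemma (\cref{lem:cont0}) with $C_1=C$ and $C_2=C'$. Your write-up is in fact more complete than the paper's, which leaves implicit the easy checks that the support remains a union of independent cliques in $G'$ and that the LCP conditions at vertices of $V(G)$ are unchanged because $\xhat_i=0$.
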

\begin{proof}
In $G'$, $i$ is a vertex such that $C \subset N_{G'}(i)$ and $C' \cap N_{G'}(i) \neq \emptyset$. Thus, using Two Clique Lemma (\cref{lem:cont0}), $i$ satisfies the LCP$_\delta(G)$ conditions \cref{eqn:lcp1,eqn:lcp2,eqn:lcp3} for $\delta \in [\gamma(G'),1)$.
\end{proof}

We now show the main result of this subsection showing the monotonicity of maximum $\ell_1$ norm amongst Independent Clique Solutions. In the following lemma, we have a graph $G'$ which is constructed by adding a vertex $i$ to $G$ and it is shown that the maximum $\ell_1$ norm amongst Independent Clique Solutions of $G'$ is greater than that amongst $G$. Recall that \cref{prop:cliqmaxind} showed that there always exists a union of independent cliques $\mathcal{K}$ of $G$ with $|\mathcal{K}| = \alpha(G)$ such that $x^\mathcal{K}_{\max}(\delta)$ as defined in $\cref{eqn:icsfundef}$ is a solution of \textup{{\rm maxICS}}$_\delta(G)$.

\begin{lemma}\label{lem:cliqorder}
Consider two graphs $G(V,E)$ and $G'(V',E')$, where $V' = V \cup \{i\}$ and  $G'_V = G$. Let $x^\mathcal{K}(\delta)$ and $x^{\mathcal{K}'}(\delta)$ be solutions of {\rm maxICS}$_\delta(G)$ and {\rm maxICS}$_\delta(G')$ respectively, where $\mathcal{K}$ and $\mathcal{K}'$ are sets of independent cliques of $G$ and $G'$ such that $|\mathcal{K}| = \alpha(G)$ for some $\delta \in [\eta(G'),1)$ with $\eta(\cdot)$ is defined as in \cref{eqn:etadef}. Then, $||x^\mathcal{K}(\delta)||_1$ $\leq ||x^{\mathcal{K}'}(\delta)||_1$.
\end{lemma}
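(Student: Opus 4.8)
The plan is to exploit the optimality of $x^{\mathcal{K}'}(\delta)$: since it maximizes $\bfe\t x$ over all of $\textup{ICS}_\delta(G')$, it suffices to exhibit a single ICS $\hat x$ of $G'$ with $\|\hat x\|_1\geq \|x^{\mathcal{K}}(\delta)\|_1$. Every candidate $\hat x$ I build is obtained from the optimal ICS $x^{\mathcal{K}}(\delta)$ of $G$ by deciding what to do at the new vertex $i$, and the whole argument splits on whether simply setting the $i$-th coordinate to $0$ already works. Write $\sigma=\bigcup\mathcal{K}$ for the support of $x^{\mathcal{K}}(\delta)$ and, for each clique $C_t\in\mathcal{K}$, put $m_t=|N_{G'}(i)\cap C_t|$ and $c_t=|C_t|$. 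Because $G'_V=G$, the members of $\mathcal{K}$ remain pairwise independent cliques in $G'$, so for the vector $(x^{\mathcal{K}}(\delta),0)$ the only \textup{LCP}$_\delta(G')$ condition that can fail is $\C_i\geq 1$ at the new vertex; the conditions \cref{eqn:lcp1,eqn:lcp2,eqn:lcp3} at vertices of $V$ are inherited unchanged since the $i$-th coordinate is $0$. Thus the natural dichotomy is whether $\C_i=\delta\sum_t \frac{m_t}{1+(c_t-1)\delta}\geq 1$.

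In the easy regime $\C_i\geq 1$ I take $\hat x=(x^{\mathcal{K}}(\delta),0)\in\textup{ICS}_\delta(G')$, so $\|\hat x\|_1=\|x^{\mathcal{K}}(\delta)\|_1$ and we are done; the clean sufficient condition for this is that $i$ be fully connected to one clique and adjacent to another, which is exactly \cref{lem:staysol} (applicable since $\delta\geq\eta(G')\geq\gamma(G')$). If instead $\C_i<1$ but $i$ has no neighbour in $\sigma$, I would adjoin $\{i\}$ as a new degenerate clique: $\mathcal{K}\cup\{\{i\}\}$ is again a family of independent cliques, the vector from \cref{eqn:icsfundef} puts $x_i=1$ and leaves all other coordinates fixed, $\C_i=1$, and every $\C_q$ only increases, so $\hat x$ is an ICS with $\|\hat x\|_1=\|x^{\mathcal{K}}(\delta)\|_1+1$.

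The substantive regime is $\C_i<1$ with $i$ touching $\sigma$. By the contrapositive of \cref{lem:staysol}, $i$ is fully connected to at most one clique. If it is fully connected to exactly one $C_t$ (and meets no other clique), I merge $i$ into it, replacing $C_t$ by $C_t\cup\{i\}$; the per-clique contribution strictly increases since $\frac{c_t+1}{1+c_t\delta}-\frac{c_t}{1+(c_t-1)\delta}=\frac{1-\delta}{(1+c_t\delta)(1+(c_t-1)\delta)}>0$. Otherwise $i$ is partially but never fully adjacent to each clique it meets, so each such $C_t$ has a vertex outside $N_{G'}(i)$; here I shrink every clique to $C_t'=C_t\setminus N_{G'}(i)\neq\emptyset$ and adjoin $\{i\}$, forming the independent family $\mathcal{K}''=\{C_t'\}_t\cup\{\{i\}\}$ (picking one vertex from each $C_t'$ together with $i$ exhibits an independent set of size $\alpha(G)+1$, so $\alpha(G')=\alpha(G)+1$). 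The decisive quantitative point is that the norm lost to shrinking is strictly below the $+1$ gained from $\{i\}$: using the identity $\frac{c_t}{1+(c_t-1)\delta}-\frac{c_t-m_t}{1+(c_t-m_t-1)\delta}=\frac{m_t(1-\delta)}{(1+(c_t-1)\delta)(1+(c_t-m_t-1)\delta)}$ and $c_t-m_t\geq 1$, the $t$-th loss is at most $\frac{1-\delta}{\delta}$ times the $t$-th summand $\delta\frac{m_t}{1+(c_t-1)\delta}$ of $\C_i$, hence the total loss is at most $\frac{1-\delta}{\delta}\,\C_i<\frac{1-\delta}{\delta}<1$, the last step because $\delta\geq\gamma(G')>\frac{1}{2}$. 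Thus $\|x^{\mathcal{K}''}(\delta)\|_1>\|x^{\mathcal{K}}(\delta)\|_1$.

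The main obstacle is that in both branches of the substantive regime the construction rewrites the weights on the retained cliques, so the real work is to certify that the resulting vector is genuinely an element of $\textup{ICS}_\delta(G')$, i.e.\ that $\C_q\geq 1$ survives at every $q$ left out of the new support. For a vertex deleted from a shrunken clique this is immediate: such a $q$ is adjacent both to $i$ (now carrying weight $1$) and to the nonempty remnant $C_t'$, giving $\C_q\geq\delta\big(1+\frac{|C_t'|}{1+(|C_t'|-1)\delta}\big)\geq 2\delta\geq 1$. The delicate part is a vertex $q\notin N_{G'}(i)$ that already lay outside $\sigma$ (and, in the merge branch, the non-support neighbours of the enlarged clique, whose weights went \emph{down}): deleting or reweighting its clique-neighbours changes its discounted sum, and this must be reconciled with the raised weight of the surviving clique members and with the $\lceil 1/\delta\rceil\geq 2$-domination of $\sigma$ from \cref{lem:gen}(e), ruling out the bad configurations by a maximality argument paralleling the impossible ``Case 3'' in the proof of \cref{thm:cliqexist}. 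Once this compensation estimate is established uniformly over the adjacency pattern of $i$, the new vector is a valid ICS of $G'$, and optimality of $x^{\mathcal{K}'}(\delta)$ over $\textup{ICS}_\delta(G')$ yields $\|x^{\mathcal{K}}(\delta)\|_1\leq \|x^{\mathcal{K}'}(\delta)\|_1$, completing the lemma.
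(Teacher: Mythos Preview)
Your easy cases are fine, and your case split is sensible. The substantive branches, however, contain a real gap that you yourself flag but do not close: you never verify that the modified family is an ICS of $G'$.

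In the \emph{merge} branch you assert that replacing $C_t$ by $C_t\cup\{i\}$ yields an ICS. It need not. When $C_t$ is enlarged, the weight on each of its vertices drops from $\tfrac{1}{1+(c_t-1)\delta}$ to $\tfrac{1}{1+c_t\delta}$, so for any $q\in N_{G'}(C_t\cup\{i\})\setminus\sigma$ the value $\C_q$ strictly decreases. The paper's Case~3 shows exactly how this can fail: it introduces the set $F$ of neighbours of the enlarged clique that are fully connected to \emph{some} clique of $\widehat{\mathcal{K}}$, and only when $F=M$ does the Two Clique Lemma certify the merged family; when $F\neq M$ the merge is \emph{not} an ICS, and one must instead show $\alpha(G')>\alpha(G)$ and build a fresh ICS. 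Your proof does not make this distinction.

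In the \emph{shrink} branch the difficulty is sharper. Your ``compensation estimate'' is not available: take $q\in V\setminus\sigma$ with $q\notin N_{G'}(i)$ whose neighbours in some $C_t$ all lie in $N_{G'}(i)$; after shrinking, the contribution of $C_t$ to $\C_q$ drops to zero and is replaced by nothing (since $q\not\sim i$), so $\C_q$ can fall below $1$. No general maximality argument in the style of the impossible ``Case~3'' of \cref{thm:cliqexist} rules this configuration out. This is precisely why the paper does \emph{not} try to modify $\mathcal{K}$ in Case~2: instead it proves $\alpha(G')=\alpha(G)+1$, invokes \cref{alg:ics} on $G'$ to produce a certified ICS $x^{\widetilde{\mathcal{K}}}(\delta)$ with $|\widetilde{\mathcal{K}}|=\alpha(G')$, and then compares norms via the worst-case inequality
\[
\alpha(G')\ \geq\ \frac{\omega(G)\,(\alpha(G')-1)}{1+(\omega(G)-1)\delta},
\]
which holds exactly when $\delta\geq\eta(G')$. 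That you never use the $\kappa$-part of $\eta(G')$ anywhere in your shrink argument (you use only $\delta>\tfrac12$) is a symptom that the intended ICS is not the right object; the $\eta$ threshold is what makes the norm comparison go through, and the paper shows by example that it is tight.

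In short: your plan diverges from the paper in the substantive cases by attempting a local modification of $\mathcal{K}$, whereas the paper abandons $\mathcal{K}$ whenever the easy extension fails, proves $\alpha(G')>\alpha(G)$, and builds a new ICS from scratch via \cref{alg:ics}. The missing idea is this ``rebuild and compare via $\eta$'' step; without it the ICS verification you defer cannot, in general, be completed.
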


\begin{proof}
We divide this proof into three cases depending on the neighborhood of the vertex $i$ which is in $G'$ but not in $G$. Let $\mathcal{K} = \{C_1,C_2,\dots, C_{|\mathcal{K}|}\}$. The first case is that $i$ is fully connected to some clique $C_j \in \mathcal{K}$ and has at least one more edge connecting at least one other clique $C_k \in \mathcal{K}$ with $k \neq j$. In this case, we explicitly construct a solution of \textup{{\rm maxICS}}$_\delta(G')$ with $\ell_1$ norm at least as much as $x^\mathcal{K}(\delta)$. The second case is when $i$ is not fully connected to any clique $C_i \in \mathcal{K}$. In this case, we show that the size of maximum indepdendent set of $G'$ is bigger than that of $G$ and hence we can construct an ICS of $G'$ using \cref{alg:ics} which has $\ell_1$ norm greater than that of $x^\mathcal{K}(\delta)$. The last case is when $N_{G'}(i)\cap \sigma(x^\mathcal{K}(\delta))$ is equal to a clique $C_j \in \mathcal{K}$. In this case, we divide into two subcases and construct separate solutions with $\ell_1$ norm greater than that of $x^\mathcal{K}(\delta)$.\\

\noindent \textit{Case 1:}  $i$ is fully connected to some clique $C_j \in \mathcal{K}$ and has at least one more edge connecting at least one other clique $C_k \in \mathcal{K}$ with $k \neq j$.

Let $y^\mathcal{K}(\delta) \in \Real^{|V(G')|}$ be defined as
\[
 y^\mathcal{K}_t(\delta) = \begin{cases}
                            x^\mathcal{K}_{\max,t}(\delta) & \text{if } t \in V(G) \\
                            0 &   \text{if } t = i.
                           \end{cases}
\]
The LCP$_\delta(G')$ conditions \cref{eqn:lcp1,eqn:lcp2,eqn:lcp3} are satisfied  $y^\Kscr(\delta)$ for all $j \in V(G)$ since $y^\mathcal{K}_i(\delta) = 0$ and they were already satisfied under $x^\mathcal{K}(\delta)$ in $G$. For $i$, by \cref{lem:staysol}, since $\delta \geq \eta(G') \geq \gamma(G')$ and since $i$ is fully connected to $C_j$ and has at least one more edge connecting $C_k$, the LCP$_\delta(G')$ conditions \cref{eqn:lcp1,eqn:lcp2,eqn:lcp3} are also satisfied for $i$ by the Two Clique Lemma. Thus, in this case, $||x^\mathcal{K}(\delta)||_1 = ||y^\mathcal{K}(\delta)||_1 \leq ||x^{\mathcal{K}'}(\delta)||_1$.\\

\noindent \textit{Case 2:} $C_j \not\subset N_{G'}(i)\cap \sigma(x^\mathcal{K}(\delta)) \ \forall C_j \in \mathcal{K}$, i.e. $i$ is not fully connected to any clique in $\mathcal{K}$.\\
In this case, we will a maximum independent set of the graph $G'$ which has size strictly bigger than the maximum independent set of $G$ thereby arriving at a contradiction. Let $S$ be a maximum independent set of $G$ formed by choosing one node from each independent clique in $\Kscr$. In $G'$ we construct an independent set with size $|S|+1$ in the following way. For all the cliques which have no vertex as a neighbor of $i$, choose any vertex to be in the independent set. For the cliques which intersect the neighborhood of $i$, choose a vertex in the clique which is not a neighbor of $i$ in the independent set. This gives us an independent set of size at least $|S|$. Now, by construction $i$ has no neighbors in the independent set found, hence including $i$ still preserves the independence. Thus we have an independent set, say $\widetilde{S}$, of size $|S|+1$, and the maximum independent size can increase at most by 1. Hence, $\alpha(G')=|S|+1$. Next, we construct an ICS of $G'$ using a maximum independent set and show that its $\ell_1$ norm is greater than that of $x^\mathcal{K}(\delta)$.

Using \cref{alg:ics} we can find a set of independent cliques $\widetilde{\mathcal{K}}$ in $G'$ and a corresponding ICS $x^{\widetilde{\mathcal{K}}}(\delta)$ of $\LCP_\delta(G')$  such that $|\widetilde{\mathcal{K}}| = \alpha(G')  =|S|+1 > \alpha(G)$. We now find conditions on $\delta$ such that $||x^\mathcal{K}(\delta)||_1 \leq ||x^{\widetilde{\mathcal{K}}}(\delta)||_1$. We consider the least possible value of $||x^{\widetilde{\mathcal{K}}}(\delta)||_1$ and the maximum possible value of $||x^\mathcal{K}(\delta)||_1$. Note that we have, $||x^{\widetilde{\mathcal{K}}}(\delta)||_1 = \sum_{C \in \widetilde{\mathcal{K}}}\frac{|C|}{1 + (|C|-1)\delta}$. Each term in the sum is increasing with $|C|$. Thus, the least value of $||x^{\widetilde{\mathcal{K}}}(\delta)||_1$ is when $|C| = 1, \forall C \in \widetilde{\mathcal{K}}$.  In this case $\norm{x^{\widetilde{\Kscr}}(\delta)}_1=\alpha(G')$. Similarly, the maximum value of $||x^{\mathcal{K}}(\delta)||_1$ is when $|C_i| = \omega(G), \forall i=1,\hdots,|\mathcal{K}|$, which gives $\norm{x^\Kscr(\delta)}_1=\frac{\omega(G)(\alpha(G')-1)}{1 + (\omega(G)-1)\delta}$. It is easy to see that 
\[
\delta \geq 1 - \frac{\omega(G)}{\alpha(G')(\omega(G)-1)} \implies \norm{x^{\widetilde{\Kscr}}(\delta)}_1\geq  \alpha(G') \geq \frac{\omega(G)(\alpha(G')-1)}{1 + (\omega(G)-1)\delta} \geq \norm{x^\Kscr(\delta)}_1.
\]
Thus, since $\delta \in [\eta(G'),1)$, we have 
\[
||x^\mathcal{K}(\delta)||_1 \leq ||x^{\widetilde{\mathcal{K}}}(\delta)||_1 \leq ||x^{\mathcal{K}'}(\delta)||_1.
\]

\noindent \textit{Case 3:} $N_{G'}(i)\cap \sigma(x^\mathcal{K}(\delta)) = C_j$ for some $C_j \in \mathcal{K}$, i.e., $i$ is fully connected to exactly one clique $C_j \in \mathcal{K}$ and has no edges to any other cliques $C_t \in \mathcal{K}$ for $t \neq j$.\\
Define a set of indepdendent cliques of $G'$,  $\widehat{\mathcal{K}}$, by replacing $C_j$ in $\mathcal{K}$ by $C_j \cup \{i\}$, i.e., $\widehat{\mathcal{K}} = (\mathcal{K} \backslash C_j) \cup \{C_j \cup \{i\}\}$.  Let $M = N_{G'}(C_j \cup \{i\})$, which is the neighborhood of the clique $C_j \cup \{i\}$. Note that any $m \in M$ has neighbors in at least one more clique other than $C_j$. For if not, then $N_{G}(m) \cap \sigma(x^\mathcal{K}(\delta)) \subseteq C_j$ and $\C_m(x^\mathcal{K}(\delta)) \leq \frac{|C_j|\delta}{1 + (|C_j|-1)\delta} < 1$, which is a contradiction since $x^\mathcal{K}(\delta)$ is a solution to LCP$_\delta(G)$.

Define $F = \{m \in M | C_l \subset N_{G'}(m) \cap \sigma(x^{\widehat{\mathcal{K}}}(\delta)), \ {\rm for \ some  \ } C_l \in \widehat{\mathcal{K}}\}$, where $x^{\widehat{\mathcal{K}}}(\delta)$ is defined as in \cref{eqn:candidateICS} for the collection of cliques $\widehat{\Kscr}$. $F$ is the set of neighbors of $C_j \cup \{i\}$ such that they are fully connected to at least one clique $C_l \in \widehat{\mathcal{K}}$. We now have two cases depending on whether $F= M$.\\

%
%

\noindent \textit{Case 3a:} $F = M$.\\
In this case, we show that $x^{\widehat{\mathcal{K}}}(\delta)$ is a solution of LCP$_\delta(G')$. To claim this, we divide the vertex set $V'$ as 
\[
V' = (V' \backslash (F \cup C_j \cup \{i\})) \cup (F) \cup (C_j \cup \{i\}).
\] 
Consider $v \in V' \backslash (F \cup (C_j \cup \{i\}))$. It follows from \cref{eqn:candidateICS} that for such a $v$, we have $x^{\widehat{\mathcal{K}}}_i(\delta) = x^{\mathcal{K}}_{i}(\delta)$ for all $i \in \Nbar_{G}(v)$. Since $ x^{\mathcal{K}}(\delta)$ is a solution to $\LCP_{\delta}(G)$, the  conditions \cref{eqn:lcp1}-\cref{eqn:lcp3} for $\LCP_\delta(G')$  are satisfied for all $v \in V' \backslash (F \cup (C_j \cup \{i\}))$ by $x^{\widehat{\Kscr}}(\delta)$. Now consider $v \in F$. We note that if $v$ is such that $C_l$ is any clique in $\widehat{\mathcal{K}}$ other than $C_j \cup \{i\}$, then $v$ is fully connected to $C_l$ and has at least one neighbor in $C_j \cup \{i\}$. Hence, using the Two Clique lemma, \cref{lem:staysol}, the $\LCP_\delta(G')$ conditions are satisfied with $x^{\widehat{\mathcal{K}}}(\delta)$ for $v$. Next, if $C_l = C_j \cup \{i\}$,  since $v\in F=M$, we conclude that $v$ has neighbors in at least one clique other than $C_j$ in $\widehat{\mathcal{K}}$. Again, using the Two Clique Lemma, \cref{lem:staysol}, LCP conditions are satisfied for $v$ by $x^{\widehat{\Kscr}}(\delta)$. Finally for any node in $C_j \cup \{i\}$, the LCP$_\delta(G')$ conditions are satisfied by definition of $x^{\widehat{\mathcal{K}}}(\delta)$. Thus, for any vertex in $V'$,  the LCP$_\delta(G')$ conditions are satisfied with $x^{\widehat{\mathcal{K}}}(\delta)$. 
 
Note that, $||x^{\widehat{\mathcal{K}}}(\delta)||_1 - ||x^\mathcal{K}(\delta)||_1 = \frac{n+1}{1+n\delta} - \frac{n}{1+(n-1)\delta} \geq 0 \  \forall \delta < 1$. Thus, $||x^\mathcal{K}(\delta)||_1 \leq ||x^{\widehat{\mathcal{K}}}(\delta)||_1 \leq ||x^{\mathcal{K}'}(\delta)||_1$.\\

\noindent \textit{Case 3b:} $F \neq M$.\\
We show that in this case we can construct an independent set in $G'$ with size strictly greater than $\alpha(G)$. Let $p \in M \backslash F$. We know that $p$ has neighbors in at least one  clique other than $C_j$, say $C_o$, and neither clique is fully connected to $p$ since $p \not\in F$. Let $S$ be a maximum independent set of $G$ formed by choosing one node from each independent clique in $\Kscr$. 
Suppose the nodes chosen from $C_o$ and $C_j$ are $a$ and $b$, respectively. Now, consider  $\widetilde{S} = (S \backslash \{a,b\}) \cup \{c,d,p\}$ where $c \in C_o \backslash N_{G'}(p)$ and $d \in (C_j \cup \{i\}) \backslash N_{G'}(p)$. It can be observed that $\widetilde{S}$ is an indepdendent set and $|\widetilde{S}| > |S|$. Thus we have found an independent set of $G'$ of size greater than that of the maximum independent set of $G$. Using the argument in the second paragraph of \textit{Case 2}, we have a set of indepdendent cliques in $G'$, say $\widetilde{\mathcal{K}}$ generated from $\widetilde{S}$ by \cref{alg:ics}, such that $||x^\mathcal{K}(\delta)||_1 \leq ||x^{\widetilde{\mathcal{K}}}(\delta)||_1 \leq ||x^{\mathcal{K}'}(\delta)||_1$.

Thus, having considered all possible cases for the neighborhood of the node $i$, we have shown that $||x^\mathcal{K}(\delta)||_1 \leq ||x^{\mathcal{K}'}(\delta)||_1$.
\end{proof}

\subsection{Proof of the Main Results}\label{subsec:proof}

We have shown the existence of ICSs in any graph and the monotonicity of the solution of {\rm maxICS}$_\delta(G)$ with respect to the number of vertices in the graph. Next, we show that for $\delta \in [\eta(G),1)$, any solution of {\rm maxICS}$_\delta(G)$ is in fact also a solution of {\rm maxSOL}$_\delta(G)$. To show this, we first show that the $\ell_1$ norm of any ICS is greater than that of any full support solution of the LCP$_\delta(G)$. Then, by induction, we prove our main result.

\begin{proposition}\label{prop:cliqfull}
Let $\delta>0$ and consider  a graph $G= (V,E)$. The $\ell_1$ norm of any ICS is greater than that of any solution of  $\LCP_\delta(G)$ with support $V$.
\end{proposition}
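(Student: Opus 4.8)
The plan is to avoid any direct combinatorial comparison between the clique weights of the ICS and the entries of the full‑support solution (which looks messy and is, I expect, the wrong route) and instead work entirely through the potential function $\phi(x;\delta,G)=\bfe\t x-\tfrac12 x\t(I+\delta A)x$ of \cref{lem:lcp-potfun}. The pivotal observation is that at \emph{every} solution $x$ of $\LCP_\delta(G)$ one has $\phi(x;\delta,G)=\tfrac12\|x\|_1$: by the complementarity condition \cref{eqn:lcp3}, for each $i$ either $x_i=0$ or $\C_i(x)=1$, so in all cases $x_i\C_i(x)=x_i$; hence $x\t(I+\delta A)x=x\t\C(x)=\bfe\t x=\|x\|_1$, and plugging into the definition of $\phi$ gives the claim. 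Consequently, comparing $\ell_1$ norms of two solutions is the same as comparing their $\phi$ values, and it suffices to prove $\phi(y;\delta,G)\ge\phi(z;\delta,G)$ whenever $y$ is an ICS (indeed, any solution) and $z$ is a solution with $\sigma(z)=V$.

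Next I would use that $z$ has full support. Then \cref{eqn:lcp3} with $z_i>0$ for all $i$ forces $\C_i(z)=1$ for \emph{all} $i$, i.e. $Mz=\bfe$ where $M:=I+\delta A$ (symmetric); so $z$ is an interior stationary point of $\phi$. Using $Mz=\bfe$ and symmetry of $M$, a one‑line expansion of $\bfe\t(y-z)-\tfrac12(y\t My - z\t Mz)$ collapses to the exact identity
\[
\phi(y;\delta,G)-\phi(z;\delta,G)=-\tfrac12\,(y-z)\t M\,(y-z),
\]
so the proposition reduces to the single inequality $(y-z)\t M(y-z)\le 0$.

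To establish that, put $u:=\C(y)-\bfe=My-\bfe$. Since $y$ solves $\LCP_\delta(G)$, \cref{eqn:lcp2} gives $u\ge 0$ and \cref{eqn:lcp3} gives $y\t u=0$. Using $Mz=\bfe$ once more,
\[
(y-z)\t M(y-z)=(y-z)\t(My-\bfe)=(y-z)\t u=y\t u-z\t u=-\,z\t u\le 0,
\]
where the last inequality is just $z\ge 0$ (from \cref{eqn:lcp1}) and $u\ge 0$. Combining with the identity above, $\phi(y;\delta,G)-\phi(z;\delta,G)=\tfrac12\,z\t u\ge 0$, whence $\|y\|_1\ge\|z\|_1$, which is the assertion.

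I do not expect a genuine obstacle once this route is spotted — that recognition (that the full‑support solution is precisely the interior stationary point, and that the potential difference is a single quadratic form pinned down by complementarity) is the only nontrivial step. The remaining care‑points are purely bookkeeping: correctly deriving $\phi(x)=\tfrac12\|x\|_1$ at a solution, and noting that full support upgrades \cref{eqn:lcp3} to the full linear system $Mz=\bfe$. It is worth remarking in the write‑up that the clique structure of $\sigma(y)$ is never used, so the statement in fact holds for an arbitrary solution $y$ against any full‑support solution $z$, with equality exactly when $z\t u=0$, i.e. when $\C(y)=\bfe$.
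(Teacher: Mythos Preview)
Your argument is correct and takes a genuinely different route from the paper. The paper proceeds combinatorially: it sums the identity $\C_i(x)=1$ (valid for every $i$ since $x$ has full support) over $i\in\sigma(x^{\mathcal{K}})$, splits the resulting double sum according to whether $j$ lies in $\sigma(x^{\mathcal{K}})$ or not, then plugs in the explicit clique values $x^{\mathcal{K}}_j=\frac{1}{1+(|C|-1)\delta}$ and invokes the $\lceil 1/\delta\rceil$-dominating property of $\sigma(x^{\mathcal{K}})$ (\cref{lem:gen}(e)) to show the residual term $\sum_{j\notin\sigma(x^{\mathcal{K}})}x_j\bigl(\delta\sum_{i\in\sigma(x^{\mathcal{K}})}a_{ij}-1\bigr)$ is nonnegative. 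Your potential-function route is both shorter and strictly more general: the reduction $\phi(y)-\phi(z)=\tfrac12 z^\top u\ge 0$ with $u=My-\bfe$ uses only that $y$ is \emph{some} solution and $z$ has full support, so the clique structure of $\sigma(y)$ is irrelevant, exactly as you remark. What the paper's computation buys, by contrast, is an explicit expression for the gap $\|x^{\mathcal{K}}\|_1-\|x\|_1$ in terms of the clique sizes and the neighbour counts $|N_G(j)\cap\sigma(x^{\mathcal{K}})|$; this is more informative but not required for the proposition as stated.
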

\begin{proof} Let 
$V=\{1,2,\dots, n\}$. Let $x^\mathcal{K}$ be an ICS of $\LCP_\delta(G)$ and let $x$ be a solution of $\LCP_\delta(G)$ with support $V$. Clearly, for all $i\in V$,
\[
C_i(x) = x_i + \delta \sum_{j = 1}^{n}a_{ij}x_j = 1.
\]
Summing over $i \in \sigma(x^\mathcal{K})$, we get
\begin{equation}\label{eqn:cliqfull1}
\sum_{i \in  \sigma(x^\mathcal{K})}x_i = |\sigma(x^\mathcal{K})| - \delta\sum_{i \in \sigma(x^\mathcal{K})} \sum_{j = 1}^{n}a_{ij}x_j.
\end{equation}
Now,
\begin{align} 
\nonumber \sum_{i \in V} x_i &= \sum_{i \in \sigma(x^\mathcal{K})} x_i + \sum_{i \in \sigma(x^\mathcal{K})^c} x_i, 
\end{align}
whereby from \cref{eqn:cliqfull1},
\begin{align}
\nonumber \sum_{i \in V} x_i &= \sum_{i \in \sigma(x^\mathcal{K})^c} x_i + |\sigma(x^\mathcal{K})| - \delta\sum_{i \in \sigma(x^\mathcal{K})} \sum_{j = 1}^{n}a_{ij}x_j \ \ \ \ \\ 
\label{eqn:cliqfull5}
&= |\sigma(x^\mathcal{K})| - \delta\sum_{i \in \sigma(x^\mathcal{K})} \sum_{j \in \sigma(x^\mathcal{K})}a_{ij}x_j  - \sum_{j \in \sigma(x^\mathcal{K})^c}x_j(\delta \sum_{i \in \sigma(x^\mathcal{K})}a_{ij} - 1).
\end{align}
Let $\mathcal{K} = \{C_j|1\leq j\leq m\}$ where $C_j$'s are disjoint cliques and let $n_j = |C_j|$. Then,
\begin{align*}
\sum_{i \in \sigma(x^\mathcal{K})} x_i^\mathcal{K} & = \sum_{i = 1}^{m}\frac{n_i}{1+(n_i-1)\delta}, \\ 
\text{and } |\sigma(x^\mathcal{K})| & = \sum_{i = 1}^{m}n_i.
\end{align*}
Also,
\begin{equation}\label{eqn:cliqfull9}
\delta\sum_{i \in \sigma(x^\mathcal{K})} \sum_{j \in \sigma(x^\mathcal{K})}a_{ij}x_j = \delta\sum_{i = 1}^{m}\frac{n_i(n_i-1)}{1+(n_i-1)\delta},
\end{equation}
which follows from the fact that amongst the vertices in $\sigma(x^\mathcal{K})$ only the ones which are in the complete networks are neighbors with each other. This gives,
\begin{align*}
|\sigma(x^\mathcal{K})| - \delta\sum_{i \in \sigma(x^\mathcal{K})} \sum_{j \in \sigma(x^\mathcal{K})}a_{ij}x_j &=  
\sum_{i = 1}^{m}n_i -
\delta\sum_{i = 1}^{m}\frac{n_i(n_i-1)}{1+(n_i-1)\delta} \\ 
& =  \sum_{i \in \sigma(x^\mathcal{K})} x_i^\mathcal{K}.
\end{align*}
Thus, \cref{eqn:cliqfull5} becomes
\[
\sum_{i \in V} x_i +  \sum_{j \in \sigma(x^\mathcal{K})^c}x_j(\delta \sum_{i \in \sigma(x^\mathcal{K})}a_{ij} - 1)= \sum_{i \in \sigma(x^\mathcal{K})} x_i^\mathcal{K}.
\]

Now, consider $(\delta \sum_{i \in \mathcal{K}}a_{ij} - 1)$. Note that each vertex not in $\sigma(x^\mathcal{K})$ has at least $\lceil \frac{1}{\delta} \rceil$  neighbors in $\sigma(x^\mathcal{K})$, since it is a necessary condition for $x^\Kscr$ to be a solution. Thus, $(\delta \sum_{i \in \sigma(x^\mathcal{K})}a_{ij} - 1) \geq \big(\lceil \frac{1}{\delta} \rceil \delta - 1 \big) \geq 0$. Thus the second term in the LHS of the above equation is non-negative. It follows that  the $\ell_1$ norm of any ICS is greater than that of any full support solution of the LCP$_\delta(G)$.
\end{proof}

\begin{theorem}\label{thm:cliqmain}
Any solution of {\rm maxICS}$_\delta(G)$ is  also a solution of {\rm maxSOL}$_\delta(G)$ for any $\delta \in [\eta(G),1)$.
\end{theorem}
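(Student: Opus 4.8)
The plan is to show that $\textup{maxICS}_\delta(G)$ and $\textup{maxSOL}_\delta(G)$ have the same optimal value. Since every ICS is a solution of $\LCP_\delta(G)$, the optimal value of $\textup{maxICS}_\delta(G)$ is automatically at most that of $\textup{maxSOL}_\delta(G)$; and once the two values are known to agree, any optimizer of $\textup{maxICS}_\delta(G)$, being feasible for $\textup{maxSOL}_\delta(G)$ and attaining that value, solves $\textup{maxSOL}_\delta(G)$. Hence it suffices to prove the reverse inequality between the optimal values. Fix $\delta\in[\eta(G),1)$ (in particular $G$ has an edge, for otherwise $\eta(G)$ is undefined) and let $x^*$ solve $\textup{maxSOL}_\delta(G)$, so $\|x^*\|_1$ equals the optimal value of $\textup{maxSOL}_\delta(G)$; I will exhibit an ICS of $G$ with $\ell_1$ norm at least $\|x^*\|_1$.

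First I would restrict $x^*$ to its support. Put $H:=G_{\sigma(x^*)}$; by \cref{lem:gen}~(f) the vector $\widetilde x:=x^*_{\sigma(x^*)}$ solves $\LCP_\delta(H)$ with full support $V(H)$ and satisfies $\|\widetilde x\|_1=\|x^*\|_1$ (when $\sigma(x^*)=V(G)$ this is just $H=G$, $\widetilde x=x^*$). Applying \cref{prop:cliqfull} to $H$ — an ICS of $H$ exists by \cref{thm:cliqexist}, since $\omega(H)\le\omega(G)$ forces $\delta\ge\gamma(G)\ge\gamma(H)$, and if $H$ is edgeless the all-ones vector is its unique solution and is itself an ICS — the optimal value of $\textup{maxICS}_\delta(H)$ is at least $\|\widetilde x\|_1=\|x^*\|_1$. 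It then remains only to carry this bound up from $H$ to $G$, for which I would write $G$ as the last term of a chain $H=H_0\subset H_1\subset\cdots\subset H_m=G$ in which each $H_k$ is $H_{k-1}$ together with one further vertex of $V(G)\setminus\sigma(x^*)$ (so every $H_k$ is an induced subgraph of $G$), and apply the monotonicity lemma \cref{lem:cliqorder} across each link $H_{k-1}\subset H_k$, supplying at stage $k-1$ — as that lemma requires — a solution of $\textup{maxICS}_\delta(H_{k-1})$ with exactly $\alpha(H_{k-1})$ cliques, furnished by \cref{prop:cliqmaxind} (or, if $H_{k-1}$ is edgeless, by the all-ones vector). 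Chaining the resulting inequalities shows the optimal value of $\textup{maxICS}_\delta(H)$ is at most that of $\textup{maxICS}_\delta(G)$; combined with the bound of the previous sentence, the optimal value of $\textup{maxICS}_\delta(G)$ is at least $\|x^*\|_1$, which is precisely the reverse inequality.

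The step that requires care — and, I expect, the only genuine obstacle — is verifying that the hypotheses of \cref{lem:cliqorder} and \cref{prop:cliqmaxind} actually hold along the whole chain, i.e.\ that $\delta\in[\eta(H_k),1)$ for every $H_k$ that contains an edge (an edgeless $H_k$ can occur only in an initial segment of the chain, where $\eta$ is undefined but the required monotonicity is trivial since the all-ones vector is the unique solution and an ICS). This reduces to the elementary fact that $\eta$ is non-decreasing under the induced-subgraph order: one has $\omega(H_k)\le\omega(G)$ and $\alpha(H_k)\le\alpha(G)$, while $\gamma$ is non-decreasing in $\omega$ — with $t=\omega-1\ge1$ it is $\gamma=\frac{t-2+\sqrt{t^2+4}}{2t}$, whose derivative in $t$ equals $\frac{2-4/\sqrt{t^2+4}}{2t^2}>0$ — and $\kappa(G):=1-\frac{\omega(G)}{\alpha(G)(\omega(G)-1)}$ is non-decreasing in $\omega$ and in $\alpha$ (as $\frac{\omega}{\omega-1}$ decreases in $\omega$ and $\frac1\alpha$ in $\alpha$); hence $\eta(H_k)=\max\{\gamma(H_k),\kappa(H_k)\}\le\eta(G)\le\delta$. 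With this bookkeeping in place the argument closes; the real content is all in \cref{lem:gen}~(f), \cref{prop:cliqfull}, and the monotonicity lemma \cref{lem:cliqorder}.
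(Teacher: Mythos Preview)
Your argument is correct and follows the same route as the paper: restrict an $\ell_1$-maximizing solution to its support, use \cref{prop:cliqfull} there, and then invoke the monotonicity lemma \cref{lem:cliqorder} to pass back up to $G$. In fact you are more careful than the paper on two points it glosses over: you make explicit the one-vertex-at-a-time chain $H=H_0\subset\cdots\subset H_m=G$ needed to iterate \cref{lem:cliqorder}, and you verify that $\delta\ge\eta(H_k)$ along the whole chain by checking that $\eta$ is monotone under the induced-subgraph order (and you handle the edgeless corner cases). These are genuine details the paper's proof leaves implicit.
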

\begin{proof}
First, we note from \cref{prop:cliqfull} that the $\ell_1$ norm of any ICS corresponding to maximum independent sets is at least as much as that of any solution with full support. Thus, we only need to prove that the $\ell_1$ norm of solution of {\rm maxICS}$_\delta(G)$ is at least as much as that of any other solution which does not have full support.

Consider a solution $x$ such that $\sigma(x) \subset V$ and assume that it maximizes the $\ell_1$ norm. Then, consider the restricted graph $G_{\sigma(x)}$. We have that in this graph, $x$ is a full support solution. Thus, $\norm{x}_1 \leq ||x_{\max}||_1$, where $x_{\max}$ is a solution of {\rm maxICS}$_\delta(G_{\sigma(x)})$. Also, by  \cref{lem:cliqorder}, we get  $||x_{\max}||_1 \leq ||x'_{\max}||_1$, where $x'_{\max}$ is a solution of {\rm maxICS}$_\delta(G)$. Thus, the $\ell_1$ norm of $x$ is less than or equal to the $\ell_1$ norm of any solution of {\rm maxICS}$_\delta(G)$ for any $\delta \in [\eta(G),1)$. 
\end{proof}

\subsection{Graphs with a Unique Maximum Independent Set}\label{sec:k-ind}

For a graph $G=(V,E)$ set $I\subseteq V$ is said to be the {\it  unique maximum independent set} if there is no other indepdendent set $I'\subseteq V$ with $|I'| \geq |I|$. In this section, we show that when a unique independent set ($S$) exists for a graph $G$, then $x = \bfone_S$ is a solution of {\rm maxSOL}$_\delta(G)$ for $\delta \in [\eta(G),1)$. 

\begin{corollary}\label{cor:k-ind2}
For a graph $G$, when a unique maximum independent set $S$ exists, the maximum $\ell_1$ norm amongst the solutions of  \textup{LCP}$_\delta(G)$ is achieved by the characteristic vector of $S$ for $\delta \in [\eta(G),1)$.
\end{corollary}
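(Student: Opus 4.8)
The strategy is to combine \cref{prop:cliqmaxind} and \cref{thm:cliqmain} with the rigidity forced by uniqueness of the maximum independent set. By \cref{prop:cliqmaxind}, for every $\delta \in [\eta(G),1)$ there is a collection of independent cliques $\mathcal{K}$ with $|\mathcal{K}| = \alpha(G)$ such that the vector $x^{\mathcal{K}}(\delta)$ of \cref{eqn:candidateICS} solves {\rm maxICS}$_\delta(G)$. I would first argue that, under the uniqueness hypothesis, \emph{every} such $\mathcal{K}$ consists only of degenerate ($K_1$) cliques, so that $\mathcal{K}$, viewed as a vertex set, is an independent set of size $\alpha(G)$ and hence equals $S$; substituting $|C_j|=1$ into \cref{eqn:candidateICS} then gives $x^{\mathcal{K}}(\delta)=\bfone_S$. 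Once $\bfone_S$ is identified as a solution of {\rm maxICS}$_\delta(G)$, \cref{thm:cliqmain} immediately yields that $\bfone_S$ solves {\rm maxSOL}$_\delta(G)$, which is exactly the assertion of the corollary.

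The one step requiring an argument is the claim that $\mathcal{K}$ has no clique of size at least $2$. I would prove this by contradiction. Choosing one vertex from each of the $\alpha(G)$ pairwise-independent cliques of $\mathcal{K}$ produces a vertex set with no edges among its members, i.e. an independent set of cardinality $\alpha(G)$, which is therefore a maximum independent set and, by uniqueness, equals $S$. Now if some $C\in\mathcal{K}$ had $|C|\ge 2$, replacing the vertex picked from $C$ by another vertex of $C$ would again yield a set meeting each clique of $\mathcal{K}$ in exactly one vertex — hence again independent and of cardinality $\alpha(G)$ — but distinct from $S$, contradicting uniqueness. Hence all cliques of $\mathcal{K}$ are singletons, completing the identification $x^{\mathcal{K}}(\delta)=\bfone_S$.

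I do not anticipate a serious obstacle: the real content is the observation above that a non-degenerate clique inside an $\alpha(G)$-element family of independent cliques immediately produces a second maximum independent set, and the rest is a direct appeal to \cref{prop:cliqmaxind} and \cref{thm:cliqmain}. As an independent sanity check one can also verify $\bfone_S\in\SOL_\delta(G)$ by hand: for $i\in S$ the independence of $S$ gives $\C_i(\bfone_S)=1$, while for $i\notin S$ uniqueness of $S$ rules out any vertex outside $S$ having exactly one neighbour in $S$ (else swapping that neighbour in and out of $S$ would yield a second maximum independent set), so $|N_G(i)\cap S|\ge 2$ and, since $\delta\ge\eta(G)\ge\gamma(G)>1/2$ forces $\lceil 1/\delta\rceil=2$, we get $\C_i(\bfone_S)\ge 2\delta\ge 1$; together with $\bfone_S\ge 0$ and the complementarity $\bfone_S(i)(\C_i(\bfone_S)-1)=0$ this confirms \cref{eqn:lcp1,eqn:lcp2,eqn:lcp3}.
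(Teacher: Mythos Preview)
Your proposal is correct and follows essentially the same route as the paper: invoke \cref{prop:cliqmaxind} and \cref{thm:cliqmain} to obtain an $\ell_1$-maximizing ICS supported on $\alpha(G)$ independent cliques, then argue by contradiction that a non-singleton clique would yield two distinct maximum independent sets. Your added direct verification that $\bfone_S\in\SOL_\delta(G)$ is a nice sanity check not present in the paper, but it is supplementary rather than a different method.
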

\begin{proof}
From \cref{thm:cliqmain} and \cref{prop:cliqmaxind}, we know that there exists an ICS $x^\mathcal{K}(\delta)$ which solves maxSOL$\delta(G)$ where $\mathcal{K}$ is a union of disjoint independent cliques and $|\mathcal{K}| = \alpha(G)$. Let $\mathcal{K} = \{C_1,\dots,C_{\alpha(G)}\}$. We claim that all the cliques $\{C_1,\dots,C_{\alpha(G)}\}$ are $K_1$'s (single vertices). Suppose not, i.e. suppose for some $j \in \{1,\dots,\alpha(G)\}$, $C_j$ has at least two vertices (say $j_1$ and $j_2$). Then, we can construct two distinct independent sets ($S_1$ and $S_2$) by choosing one vertex from each of $C_i$ for $i \in \{1,\dots,\alpha(G)\} \backslash j$ and choosing $j_1$ in $S_1$ and $j_2$ in $S_2$ with $|S_1| = |S_2| = \alpha(G)$. But this violates the definition of unique maximum independent sets. Thus, each $C_j$ for $j \in \{1,\dots,\alpha(G)\}$ must have only one vertex and thus $\mathcal{K}$ itself is the unique maximum independent set. Hence, we have that $\bfone_\mathcal{S}$ is a solution of \textup{LCP}$_\delta(G)$ which maximizes the $\ell_1$ norm.
\end{proof}

\section{Solutions of LCP$_\delta(G)$ for $\delta \geq 1$} \label{sec:geq1}

The case when $\delta = 1$ has been extensively studied in \cite{pandit2018linear}, wherein they show that the weighted independence number of the graph is also the maximum weighted $\ell_1$ norm over the solution set of a linear complementarity problem (LCP(G)).

The case when $\delta > 1$ is very similar to the case when $\delta  = 1$. \cref{lem:gen} (e) shows that in this case the support of any solution of LCP$_\delta(G)$ is a $1$-dominating set. Also, \cref{lem:intsol} shows that $x$ is an integer solution of LCP$_\delta(G)$ if and only if it is the characteristic vector of a maximal independent set. We state the following theorem similar to \cref{thm:parthe1} for $\delta \geq 1$. The proof follows a very similar line of reasoning like Theorem 1 of \cite{pandit2018linear}.

\begin{theorem}\label{thm:geq1}
For the \textup{LCP}$_\delta(G), \ \delta > 1$, the maximum weighted $\ell_1$ norm is achieved at the characteristic vector of a  maximum weighted independent set, i.e. 
\[
\alpha_w(G) = \max\{w\t x \  | \  x \ {\rm  solves } \ \textup{LCP}_\delta(G) \},
\]
where $\alpha_w(G)$ represents the weighted independence number with weights $w$.
\end{theorem}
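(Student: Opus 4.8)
The plan is to follow the proof of Theorem~1 of \cite{pandit2018linear} almost verbatim; the only genuinely new ingredient needed to pass from $\delta=1$ to $\delta>1$ is a \emph{weighted} Motzkin--Straus estimate that absorbs the extra term produced by $\delta\neq 1$.

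\textbf{Lower bound.} First I would note that $\max\{w^\top x: x\in\SOL_\delta(G)\}\ge\alpha_w(G)$ is immediate: take a $w$-weighted maximum independent set and extend it to a maximal independent set $S$; since extending only adds nonnegative weight and $S$ is still independent, $w^\top\bfone_S=\alpha_w(G)$, and because $\lceil\tfrac1\delta\rceil=1$ for $\delta>1$, \cref{lem:intsol} gives $\bfone_S\in\SOL_\delta(G)$.

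\textbf{Reduction to the full-support case.} For the reverse inequality, take any $x\in\SOL_\delta(G)$. By \cref{lem:gen}(f), $x_{\sigma(x)}$ is a \emph{full-support} solution of $\LCP_\delta(G_{\sigma(x)})$ with the same weighted $\ell_1$ norm, and $\alpha_w(G_{\sigma(x)})\le\alpha_w(G)$ since an induced subgraph only loses independent sets. So it suffices to prove: if $x>0$ solves $\LCP_\delta(G)$ — hence $(I+\delta A)x=\bfe$ by complementarity \cref{eqn:lcp3} — then $w^\top x\le\alpha_w(G)$; and since both sides are continuous in $w$ (the right side is a maximum of finitely many linear functions of $w$), we may assume $w>0$.

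\textbf{Key step (weighted Motzkin--Straus).} Put $W=\diag(w)$ and $M:=W+\tfrac12(WA+AW)$, so $M_{ii}=w_i$ and $M_{ij}=\tfrac12(w_i+w_j)a_{ij}$. I claim $z^\top Mz\ge 1/\alpha_w(G)$ for every $z$ in the weighted simplex $\Delta_w:=\{z\ge 0: w^\top z=1\}$. Indeed, if $\sigma(z)$ contains an edge $(i,j)$, move mass by $z_i\mapsto z_i-s/w_i,\ z_j\mapsto z_j+s/w_j$; this keeps $z\in\Delta_w$, and the choice $M_{ij}=\tfrac12(w_i+w_j)$ is exactly what makes the coefficient of $s^2$ in $z^\top Mz$ vanish, so $z^\top Mz$ is affine in $s$ and its minimum over the admissible interval is attained at an endpoint where $z_i$ or $z_j$ becomes $0$; iterating at most $|\sigma(z)|$ times produces $z'\in\Delta_w$ with independent support $T$ and $z'^\top Mz'\le z^\top Mz$. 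On $T$ the matrix $M$ is diagonal, so by Cauchy--Schwarz $z'^\top Mz'=\sum_{i\in T}w_iz_i'^2\ge 1/\sum_{i\in T}w_i\ge 1/\alpha_w(G)$. Then a short computation using $Ax=(\bfe-x)/\delta$ and $x^\top WAx=x^\top AWx$ gives
\[
x^\top Mx=\Big(1-\tfrac1\delta\Big)\sum_i w_ix_i^2+\tfrac1\delta\, w^\top x\ \le\ w^\top x,
\]
the inequality because $\delta>1$ and $x_i\le 1$ (\cref{lem:gen}(c)) force $\sum_i w_ix_i^2\le w^\top x$. Applying the claim to $z=x/(w^\top x)\in\Delta_w$ yields $x^\top Mx/(w^\top x)^2\ge 1/\alpha_w(G)$, i.e. $\alpha_w(G)\ge (w^\top x)^2/x^\top Mx\ge w^\top x$, which finishes the full-support case and hence the theorem.

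\textbf{Main obstacle.} The lower bound and the reduction are essentially the $\delta=1$ argument of \cite{pandit2018linear}; the real content is the full-support estimate. The difficulty specific to $\delta>1$ is that $x$ no longer satisfies the clean relation $(I+A)x\le\bfe$ that one uses when $\delta=1$, so one cannot directly compare $x$ with a stable-set relaxation; the weighted form $M$ and the bound $\sum_i w_ix_i^2\le w^\top x$ are precisely what neutralize the $\bigl(1-\tfrac1\delta\bigr)$ term. The two points to check most carefully are that the mass-shifting terminates (each step strictly decreases $|\sigma(z)|$) and that ``affine in $s$'' is exact for the chosen off-diagonal entries of $M$, which is a one-line verification of the $s^2$-coefficient.
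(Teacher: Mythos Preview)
Your proof is correct. The lower bound and the reduction to the full-support case mirror the paper's argument (which, as it states, follows \cite{pandit2018linear}), but your treatment of the full-support case is genuinely different. The paper fixes a $w$-weighted maximum independent set $S$, sums the identities $\C_i(x^*)=1$ over $i\in S$ with weights $w_i$, and obtains
\[
w^\top x^*=\alpha_w(G)-\sum_{j\notin S}x_j^*\Bigl(\delta\sum_{i\in S}w_i a_{ij}-w_j\Bigr);
\]
a one-swap argument then shows $\sum_{i\in S}w_i a_{ij}\ge w_j$ for every $j\notin S$ (otherwise $(S\setminus N_G(j))\cup\{j\}$ would beat $S$), and $\delta\ge 1$ makes each summand nonnegative. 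Your route instead establishes a weighted Motzkin--Straus inequality $z^\top Mz\ge 1/\alpha_w(G)$ on the simplex $\{w^\top z=1,\ z\ge 0\}$ via edge-wise mass shifting (the choice $M_{ij}=\tfrac12(w_i+w_j)a_{ij}$ is exactly what kills the $s^2$-coefficient, as you note), and combines it with the computation $x^\top Mx=(1-\tfrac1\delta)\sum_iw_ix_i^2+\tfrac1\delta\,w^\top x\le w^\top x$. The paper's argument is more elementary and works directly for all $w\ge 0$ without the continuity reduction to $w>0$; yours isolates the role of the hypothesis $\delta>1$ in the single transparent inequality $\sum_i w_ix_i^2\le w^\top x$ (via $x_i\le 1$ from \cref{lem:gen}(c)) and delivers, as a by-product, a weighted Motzkin--Straus bound that may be of independent interest.
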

\begin{proof}
Let $G$ be a graph such that $|V(G)| = n$. Let $w = (w_1,w_2,\hdots,w_n)$ denote the vector of given weights. We define $M_w(\delta,G) = \max\{w\t x \  | \  x \ {\rm  solves } \ \LCP_\delta(G) \}$. We prove $M_w(\delta,G) = \alpha_w(G)$ by showing inequalities in both directions.

Let $S$ be a $w$-weighted maximum independent set. Then, it is a maximal independent set and $\bfone_S$ solves $\LCP_\delta(G)$ for $\delta \geq 1$ by \cref{lem:intsol}. Thus, $M_w(\delta,G) \geq w\t\bfone_S = \alpha_w(G)$.

Next, we want to show that $M_w(\delta,G) \leq \alpha_w(G)$. We prove this by induction on the size of the graph. For the base case, consider the graph with $1$ vertex and no edges. In this case, it is trivial to see that $M_w(\delta,G) \leq \alpha_w(G)$.\\

\noindent {\it Induction Hypothesis:} For any graph $G'$ with $|V(G')| < n$, for a given set of weights $w$ and $\delta \geq 1$, we have $M_w(\delta,G') \leq \alpha_w(G')$.

Let $G$ be a graph with vertex set $V(G) = \{1,2,\hdots,n\}$ and let $x^*$ be the maximizer of $w\t x$ where $x \in \SOL_\delta(G)$. Let $S$ be a $w$-weighted maximum indepdendent set of $G$.

\noindent {\it Case 1:} $\sigma(x^*) = V(G)$\\
Here, we have $x^*_i > 0 \ \forall i \in V(G)$. Thus, the LCP conditions \cref{eqn:lcp1,eqn:lcp2,eqn:lcp3} dictate 
\[
\C_i(x^*) = x^*_i + \delta \sum_{j=1}^{n}a_{ij}x^*_j = 1 \ \forall i \in V(G).
\]
Taking the $w$-weighted sum over $i \in S$, we get
\begin{equation}\label{eqn:geqdelta1}
 \sum_{i \in S}w_ix^*_i = \sum_{i \in s}w_i - \delta\sum_{i \in S}w_i\sum_{j=1}^{n}a_{ij}x^*_j.
\end{equation}
Now,
\begin{align} \nonumber
 M_w(\delta,G) 
 &= \sum_{i \in V(G)}w_ix^*_i \\ \nonumber
 &= \sum_{i \in S}w_ix^*_i + \sum_{i \in V(G) \backslash S}w_ix^*_i \\ \label{eqn:geqdelta2}
 &= \sum_{i \in S}w_i - \delta\sum_{i \in S}w_i\sum_{j=1}^{n}a_{ij}x^*_j + \sum_{i \in V(G) \backslash S}w_ix^*_i \\ \label{eqn:geqdelta3}
 &= \alpha_w(G) - \sum_{j \in V(G) \backslash S}x^*_j(\delta\sum_{i \in S}w_i a_{ij} - w_j),
\end{align}
where \cref{eqn:geqdelta2} follows by using \cref{eqn:geqdelta1} and \cref{eqn:geqdelta3} follows because $a_{ij} = 0$ when $i,j \in S$. We claim that $\sum_{i \in S}w_i a_{ij} - w_j \geq 0 \ \forall j \in V(G) \backslash S$. Suppose not, then we construct an independent set with higher weight than that of $S$. Let $l \in V(G) \backslash S$ be such that $w_l - \sum_{i \in S}w_i a_{il} > 0$. Consider $\widetilde{S} = (S \backslash N_G(l)) \cup {l}$. It is easy to see that $\widetilde{S}$ is an independent set and the difference between weights of $\widetilde{S}$ and $S$ is $w_l - \sum_{i \in S}w_i a_{il} > 0$. This contradicts the fact that $S$ is a $w$-weighted maximum independent set. Thus, $\sum_{i \in S}w_i a_{ij} - w_j \geq 0 \ \forall j \in V(G) \backslash S$. Since $\delta \geq 1$, this implies $\delta \sum_{i \in S}w_i a_{ij} - w_j \geq 0 \ \forall j \in V(G) \backslash S$ which gives $M_w(\delta,G) \leq \alpha_w(G)$.\\

\noindent {\it Case 2}: $\sigma(x^*) \subset V(G)$, a strict subset\\
From \cref{lem:gen}, we know that since $x^* \in \SOL_\delta(G)$, we have $x^*_{\sigma(x^*)} \in \SOL_\delta(G_{\sigma(x^*)})$. 
For brevity let $y^*:=x^*_{\sigma(x^*)}$. Now, 
\[
 M_w(\delta,G) =  \sum_{i \in V(G)}w_i x^*_i = \sum_{i \in V(G_{\sigma(x^*)})}w_i y^*_i \leq M_w(\delta,G_{\sigma(x^*)}) \leq \alpha_w(G_{\sigma(x^*)}) \leq \alpha_w(G),
\]
where $M_w(\delta,G_{\sigma(x^*)}) \leq \alpha_w(G_{\sigma(x^*)})$ follows from the induction hypothesis since \\ $|V(G_{\sigma(x^*)})| < n$. Hence, by the principle of mathematical induction, we have that for any graph $G$, with given weights $w$ and $\delta \geq 1$, we have $M_w(\delta,G) \leq \alpha_w(G)$. 
\end{proof}

Thus, we have proved that for the case $\delta \geq 1$, the weighted $\ell_1$ norm amongst the set SOL$_\delta(G)$ is maximized at the characteristic vector of the maximum weighted independent set. Particularly, the $\ell_1$ norm is maximized by maximum independent sets and they are also independent clique solutions.

\section{Conclusion}\label{sec:con}
In this paper, we identified the $\ell_1$ norm maximizing solutions of the LCP$_\delta(G)$. We introduced a new notion of independent clique solutions. These are generalizations of independent sets with independent sets being the special case in which each clique is degenerate. We showed that for $\delta \geq \eta(G)$, the $\ell_1$ norm maximizing ICSs are also the $\ell_1$ norm maximizing solutions of LCP$_\delta(G)$. 

The authors of \cite{bramoulle2014strategic} have shown that for $\delta < \frac{1}{|\lambda_{\min}(A)|}$, where $\lambda_{\min}(G)$ is the lowest eigenvalue of the adjacency matrix of graph $G$, LCP$_\delta(G)$ has a unique solution. Thus, that solution itself is $\ell_1$ norm maximizing. Our results identify certain $\ell_1$ norm maximizing solutions for $\delta \geq \eta(G)$. For $\delta \in (\frac{1}{|\lambda_{\min}(A)|}, \eta(G))$, the question of which solutions of LCP$_\delta(G)$ maximize the $\ell_1$ norm remains open.

\appendix

\section{Proofs from \cref{sec:lcpg}}

\subsection{Proof of \cref{lem:gen}}\label{appen:lem:gen}
~\\
\begin{enumerate}[label = (\alph*)]
\item From \cref{eqn:nbdsum}, $\C_i(0) = 0$ which implies \cref{eqn:lcp2} is violated. Thus, $0 \not \in$ SOL$_\delta(G$).

\item From \cref{eqn:nbdsum}, $\C_i(x) =  x_i + \delta \sum_{j \in N_G(i)}x_j$. Since $\delta \geq 0$ and $x_i \geq  0 \ \forall i \in V(G)$ from \cref{eqn:lcp1}, $\C_i(x) \geq x_i$. Thus,  $\C(x) \geq x \ \forall x \in \textup{SOL}_\delta(G$).

\item We know that $x \geq 0$. Let $x_i > 0$ for some $i$. Then, from \cref{eqn:lcp3}, $\C_i(x)  = 1$. From part (b) of \cref{lem:gen}, $x_i \leq \C_i(x) = 1$. Thus, $x_i \in [0,1] \ \forall i \in V(G)$.

\item First, consider $x^1 \in \textup{SOL}_\delta(G_1)$ and $x^2 \in \textup{SOL}_\delta(G_2)$, we prove that $x = (x^1,x^2) \in \textup{SOL}_\delta(G)$. Since $x^1,x^2  \geq 0$ we have $x \geq 0$. Next, note that since $G_1$ and $G_2$ are disjoint, $N_{G_1}(i) = N_G(i) \  \forall i \in V(G_1)$ and $N_{G_2}(i) = N_G(i) \ \forall i \in V(G_2)$. Thus, for $i \in V(G_1)$, $\C_i(x) = x_i +  \delta \sum_{j \in N_G(i)}x_j = x^1_i +  \delta \sum_{j \in N_{G_1}(i)}x^1_j = \C_i(x^1)$. Similarly,  for $i \in V(G_2)$, $\C_i(x) =  \C_i(x^2)$. Thus we get $\C_i(x) \geq 1 \ \forall i \in V(G)$. Also, $x^1_i(\C_i(x^1) - 1) = 0 \ \forall i \in V(G_1)$ and $x^2_i(\C_i(x^2) - 1) = 0 \ \forall i \in V(G_2)$  imply $x_i(\C_i(x) - 1) = 0 \ \forall i \in V(G)$. Thus, $x \in  \textup{SOL}_\delta(G)$.

Next, given $x \in \textup{SOL}_\delta(G)$, we show that $x_{G_1} \in \textup{SOL}_\delta(G_1)$ and $x_{G_2} \in \textup{SOL}_\delta(G_2)$, where $x_{G_1} \in\Real^{|V(G_1)|}$, $x_{G_2} \in \Real^{|V(G_2)|}$ and 
$x=(x_{G_1},x_{G_2})$. 
Clearly $x_{G_1} \geq 0$.  Using $N_{G_1}(i) = N_G(i) \  \forall i \in V(G_1)$, we get $\C_i(x) =  \C_i(x_{G_1})$ and thus $x_{G_1}$  satisfies both \cref{eqn:lcp2,eqn:lcp3}. Hence, $x_{G_1} \in \textup{SOL}_\delta(G_1)$. Similarly, $x_{G_2} \in \textup{SOL}_\delta(G_2)$.

\item From \cref{lem:gen} (c), $0 \leq x_i \leq 1 \ \forall i \in V(G)$. Thus $\bfone_{\sigma(x)} \geq x$. Thus, for all $i \in V(G)$, $\C_i(\bfone_{\sigma(x)}) \geq \C_i(x) \geq 1$. Let $i \not \in \sigma(x)$. For this $i$ we have, $\C_i(\bfone_{\sigma(x)}) = 0 + \delta\sum_{j \in N_G(i)}x_j = \delta |N_G(i) \cap \sigma(x)|$. Thus, $|N_G(i) \cap \sigma(x)| \geq \frac{1}{\delta}$. Since, $|N_G(i) \cap \sigma(x)|$ is a positive integer, $|N_G(i) \cap \sigma(x)| \geq \lceil \frac{1}{\delta} \rceil$. Thus,  $\sigma(x)$ is a $\lceil \frac{1}{\delta} \rceil $-dominating set of $G$.

\item For $i \in V(G_{\sigma(x)})$, $\tilde{x}_i = x_i > 0$. Thus, $\sigma{(\tilde{x})} = V(G_{\sigma(x)})$. Note that for a vertex $i$ in $G_{\sigma(x)}$, the discounted sum of the closed neighborhood $\tilde{\C}(\tilde{x}) = \tilde{x}_i + \delta\sum_{j \in N_{G_{\sigma(x)}}(i)}\tilde{x}_j = x_i + \delta\sum_{j \in N_G(i)}x_j = \C_i(x)$ since $x_i = 0$ for $i \not \in \sigma(x)$. Hence,  $\tilde{x} = x_{\sigma(x)} \in \textup{SOL}_\delta(G_{\sigma(x)})$.
\end{enumerate}

\section{Proofs from \cref{sec:les1}}

\subsection{Proof of Two Clique Lemma (\cref{lem:cont0})}\label{appen:lem:cont0}
We will show that for the $x$ given the statement of \cref{lem:cont0}, $C_i(x) \geq 1$ for the interval $\delta \in [\gamma(G),1)$. It will the follow that for the given $x$ with $x_i = 0$, the LCP conditions \cref{eqn:lcp1,eqn:lcp2,eqn:lcp3} are satisfied. Since $i$ is connected to all vertices of $C_1$ and at least one vertex of $C_2$,
\begin{align*}
C_i(x) &\geq \delta \Bigg(\frac{n}{1+(n-1)\delta} + \frac{1}{1+(m-1)\delta}\Bigg)
\end{align*}
Now, observe that
\begin{align*}
& &\delta\Bigg(\frac{n}{1+(n-1)\delta} + \frac{1}{1+(m-1)\delta}\Bigg) &\geq 1 
\iff & (m + n - 2)\delta^2 + (3-m)\delta -1  \geq 0.
\end{align*}
 This in turn is true in some range $\delta \in \gamma(G),1)$ if and only if the positive root, denoted $\gamma(m,n)$, of the above quadratic in $\delta$ is less than 1. We know,
\[
\gamma(m,n) = \frac{m-3 + \sqrt{(3-m)^2 + 4(m + n - 2)}}{2(m+n-2)}
\]
Now, it is easy to check that 
\begin{align*}
& & \gamma(m,n)& \leq 1 &\iff &&   m+n \geq 2,
\end{align*}
which is true. 
We have $\gamma(m,n)$ as a decreasing function in $n$ and increasing in $m$. Thus, the largest value of $\gamma(m,n)$ is attained at $n = 1$ and $m = \omega(G)$. Thus we get the result is true for $\delta \in [\gamma(G),1)$ where \[\gamma(G) :=\gamma(\omega(G),1)= \frac{\omega(G)-3 + \sqrt{(\omega(G)-3)^2 + 4(\omega(G)-1)}}{2(\omega(G)-1)}.\]
This is as required, thereby completing the proof.

\bibliographystyle{siamplain}
\bibliography{ref}

\end{document}